\newtheorem{lem}{Lemma}
\newtheorem{thm}{Theorem}
\newcommand{\dotex}{\frac{d}{dt}}
\newcommand{\tr}[1]{\text{Tr}\left(#1\right)}
\newcommand{\ra}[1]{\text{Rank}\left(#1\right)}
\newcommand{\NN}{{\mathbb N}}
\newcommand{\RR}{{\mathbb R}}
\newcommand{\CC}{{\mathbb C}}
\newcommand{\bra}[1]{\langle #1 |}
\newcommand{\ket}[1]{| #1 \rangle}
\newcommand{\q}[1]{| #1 \rangle}
\newcommand{\bket}[1]{\left\langle #1 \right\rangle}
\newcommand{\pos}[1]{#1^{p}}
\renewcommand{\neg}[1]{#1^{n}}
\newcommand{\Lnorm}[1]{\left\|#1\right\|_{L}}
\newcommand{\bH}{\boldsymbol{H}}
\newcommand{\bI}{\boldsymbol{I}}
\newcommand{\ba}{\boldsymbol{a}}
\newcommand{\bB}{\boldsymbol{B}}
\newcommand{\bL}{\boldsymbol{L}}
\newcommand{\bS}{\boldsymbol{S}}
\newcommand{\bzq}{\boldsymbol{Q}}
\newcommand{\bBl}{\boldsymbol{B}}
\newcommand{\bM}{\boldsymbol{M}}
\newcommand{\bN}{\boldsymbol{N}}
\newcommand{\bR}{\boldsymbol{R}}
\newcommand{\Ktr}{\mathcal{K}^{1}(\mathcal{H})}
\newcommand{\KL}{\mathcal{K}_{L}(\mathcal{H})}
\newcommand{\Kf}{\mathcal{K}^f(\mathcal{H})}
\newcommand{\fA}{\mathfrak{A}}
\newcommand{\fB}{\mathfrak{B}}
\newcommand{\fC}{\mathfrak{C}}
\newcommand{\fK}{\mathfrak{K}_\infty}
\newcommand{\cH}{\mathcal{H}}
\newcommand{\cD}{\mathcal{D}}
\newcommand{\as}{{}}
\begin{document}

\title{Well-posedness and convergence of the Lindblad master equation for a quantum harmonic oscillator with multi-photon drive and damping
\thanks{This work was partially supported by the Projet Blanc ANR-2011-BS01-017-01 EMAQS and by the Belgian IAP programme DYSCO. The authors of this research are with the QUANTIC joint project team between INRIA,  ENS Paris,   Mines-ParisTech, CNRS and  Universit\'e Pierre et Marie Curie (Paris 6). } }
\author{R\'{e}mi Azouit\thanks{Centre Automatique et Syst\`{e}mes, Mines-ParisTech, PSL Research University. 60 Bd Saint-Michel, 75006 Paris, France. }
 \and Alain Sarlette\thanks{INRIA Paris, France; and Ghent University / SYSTeMS, Technologiepark 914, 9052 Zwijnaarde, Belgium. }
 \and Pierre Rouchon\thanks{Centre Automatique et Syst\`{e}mes, Mines-ParisTech, PSL Research University. 60 Bd Saint-Michel, 75006 Paris, France.}}

\date{\today}
\maketitle

\begin{abstract}
We {\as consider the model of a quantum harmonic oscillator governed by a Lindblad master equation where the typical drive and loss channels are multi-photon processes instead of single-photon ones; this implies a dissipation operator of order $2k$ with integer $k>1$ for a $k$-photon process. We prove that the corresponding PDE  makes the state converge, for large time, to an invariant subspace spanned by a set of $k$ selected basis vectors; the latter physically correspond to so-called coherent states with the same amplitude and uniformly distributed phases.} We also show that this convergence features a finite set of bounded invariant {\as functionals of the state (physical observables)}, such that the final state in the {\as invariant} subspace can be directly predicted from the initial state. The proof includes the full arguments towards the well-posedness of the corresponding dynamics in  proper Banach spaces of Hermitian trace-class operators equipped with adapted nuclear norms. It relies on the Hille-Yosida theorem and Lyapunov convergence analysis.
\end{abstract}

\paragraph{Mathematics Subject Classification:} 37L99, 47B44, 81Q93, 81S22, 81V10.
% 37B25   	Lyapunov functions and stability; attractors, repellers
% 47B44   	Accretive operators, dissipative operators, etc.
% 81V10   	Electromagnetic interaction; quantum electrodynamics
% 46L57   	Derivations, dissipations and positive semigroups in $C^*$-algebras
% 81Q93   	Quantum control
% 81S22   	Open systems, reduced dynamics, master equations, decoherence
%  	37Lxx 		Infinite-dimensional dissipative dynamical systems
%   37L99   	None of the above, but in this section
\paragraph{keywords:} infinite-dimensional dissipative dynamical systems, Lyapunov functions and stability, accretive operators, Lindblad master equation,  decoherence, quantum control, quantum electrodynamics and circuits.

%%%%%%%%%%%%%%%
%%%  INTRO  %%%
%%%%%%%%%%%%%%%

\section{Introduction: open quantum systems}

%C: isolated quantum system
The state of an isolated quantum system is notably described by a wave-function $\q{\psi}$ on a separable Hilbert space $\cH$. The evolution of $\q{\psi(t)}$ is described by the Schr\"odinger equation
$$\tfrac{d}{dt} \ket{\psi} = \frac{-i}{\hbar}\, H\, \ket{\psi} $$
where the Hamiltonian $H$ is a Hermitian operator on $\cH$. This equation implies a unitary evolution in $\cH$, i.e.~denoting $\bket{\psi | \phi}$ the scalar product between $\psi,\phi \in \cH$ and $\| \ket{\psi} \|_{\cH} = \sqrt{\bket{\psi | \psi}}$ the associated norm, we start with a normalized wave-function $\| \q{\psi(0)}\|_{\cH}=1$ and we are ensured to keep $\| \q{\psi(t)}\|_{\cH}=1$ for all $t$. Thus under such so-called Hamiltonian evolution, the Hilbert-Schmidt distance $\| \q{\psi(t)}-\q{\phi(t)} \|_{\cH}$ between two different initial states remains invariant in time. This is not suitable for control purposes, where we want to drive an initially unknown state towards a target value.

%C: open quantum systems
Doing the latter thus requires to consider open quantum systems, i.e.~systems interacting with their environment (see~\cite[Chapter 4]{HarocheBook} for a  recent  physical  introduction to decoherence  and  \cite{QOtheory,Davies76qbook} for  more  formal and mathematical  presentations). The most drastic interaction of this type is the famous projective measurement described by Von Neumann, described by an Hermitian  operator $Q:\cH \mapsto \cH$ with spectral decomposition and where the state gets projected onto the eigenspace of $Q$ corresponding to the measurement result. At the other end of the interaction spectrum, the target quantum system can be in weak interaction with an unobserved large environment which rapidly forgets its past state. We can then only describe the expected evolution of the target system, which under appropriate assumptions ensuring essentially the Markovian character of the evolution, follows a so-called Lindblad master equation \cite{QOtheory},\cite[Chapter 8.4]{NielsenChuang},\cite[Chapter 4]{HarocheBook}
$$
\tfrac{d}{dt} \rho = \tfrac{-i}{\hbar}\, [\bH,\rho] + \sum_j \boldsymbol{L}_j\rho \boldsymbol{L}_j^\dagger - \tfrac{1}{2} \boldsymbol{L}_j^\dagger \boldsymbol{L}_j \rho - \tfrac{1}{2} \rho \boldsymbol{L}_j^\dagger \boldsymbol{L}_j \, .
$$
Here the $\boldsymbol{L}_j$ can be a priori arbitrary operators on $\cH$, $\boldsymbol{L}_j^\dagger$ is the adjoint (i.e., Hermitian conjugate)  of $\boldsymbol{L}_j$ and $\rho$ is a density operator, i.e.~a nonnegative Hermitian   (i.e.,  self-adjoint) operator on $\cH$ with $\tr{\rho}=1$. When $\rho = \ket{\psi}\bra{\psi}$ i.e.~$\ra{\rho}=1$, and $\boldsymbol{L}_j=0$ $\forall j$, we recover the Schr\"odinger equation.

%C: Lindblad towards reservoir engineering
Lindblad type evolution, unlike Schr\"odinger type evolution, can make the state $\rho$ converge asymptotically towards a subspace or a unique state \cite[Chapter 8.4]{NielsenChuang},\cite{Tic1,Tic2}. From a control engineering viewpoint, it is hence tempting to design a Lindblad type system such that it stabilizes some target states thanks to the interaction of the system with an environment, much like the Watt governor does for the steam engine \cite{WattGov}. In quantum control this is called \emph{reservoir engineering} and it has been successfully applied to stabilize quantum states of interest without requiring explicit measurement feedback, see e.g.~\cite{ResEng1,ResEng2,ResEng3,ResEng0a,ResEng0b,ResEng0c,ResEng0d,legthas2015science}. Besides the technological advantage of working without sensor feedback, this also enables deterministic stabilization, since the Lindblad equation describes a deterministic evolution whereas individual quantum measurements follow a stochastic process.

%C: Infinite-dimensional H and Mazyar's reservoir
Several reservoir engineering proposals, with potential benefits for quantum technology applications, consider infinite-dimensional Hilbert spaces like the ones describing harmonic oscillation of structures (phonons) or of an electromagnetic field at a given frequency \cite{ResEng1,ResEng2,ResEng3}. The latter is subject to intense development in cavity- or circuit-Quantum ElectroDynamics experiments and truly quantum states of the electromagnetic field, with no classical equivalent, have been stabilized experimentally. In particular, our collaborators have recently proposed a scheme to stabilize ``$k$-legged Schr\"odinger cat states'', i.e.~a quantum superposition of $k$ electromagnetic field states with the same amplitude but $k$ different phases \cite{legthas2015science}. From physical arguments and an invariance analysis, they argue that their engineered reservoir stabilizes a subspace of all possible electromagnetic field states, spanned by $k$ such states. Such so-called protected subspace or decoherence-free subspace could then be used for encoding quantum information in quantum memories or quantum telecommunication applications \cite{MirrahimiCatComp2014}.
The purpose of the present paper is to rigorously establish the convergence properties of this engineered reservoir in the infinite-dimensional framework.\\

%C: Maths on infinite-dimensional Lindblad
From a mathematical viewpoint, rigorous analysis of Lindblad type dynamics with $\cH$ infinite-dimensional is nontrivial, as even the appropriate space for solutions $\rho(t)$ has to be specified. Physicists usually rely on a practical combination of physical arguments, invariance analysis, and (at best) convergence analysis of a finite-dimensional truncation to convince themselves of the soundness of a proposed Lindblad evolution, before confirming it by experimental implementation. However it is known that this holds traps, as phenomena like loss of probability mass to infinity can appear in some theoretical models. The presence of unbounded operators $\boldsymbol{L}_j$ in the Lindblad equation requires particular care \cite{Davie1977RoMP,Davies76qbook}, and with the model of \cite{MirrahimiCatComp2014,legthas2015science} we are precisely in this case. Similar questions and mathematical issues  relative to   well-posedness and long-time behavior of dissipative infinite dimensional quantum systems  have been addressed, for example,  in~\cite{AArnold} where  the  density operator $\rho$ governed by a Lindblad master equation   is replaced  by  the Wigner pseudo-probability distribution governed by a $3D$  integro-partial differential equation describing the  evolution of an electron ensemble connected to an idealized heat bath under the single-particle Hartree approximation.

%C: Summary of our contributions
The contribution of the present paper is precisely to provide a rigorous analysis for the Lindblad dynamics proposed in \cite{MirrahimiCatComp2014,legthas2015science}. {\as We introduce appropriate Banach spaces for the well-posedness} (Section \ref{sec:WP}, theorem~\ref{thm:Cauchy}); we prove the asymptotic convergence, as desired, of any initial physical state within this solution space towards a decoherence-free subspace of dimension $k$, spanned by the $k$-legged Schr\"odinger cat states (Section \ref{sec:AC}, theorem~\ref{thm:convergence}); and we characterize the limit point attained by any initial state by establishing the existence of $k^2$ physical quantities, attached to  $k^2$ linear bounded operators on $\cH$,  which remain invariant under the Lindblad evolution (Section \ref{sec:AC}, theorem~\ref{thm:invariants}).
The essential ingredients for our approach, partially inspired from~\cite{Davie1977RoMP}, are:
\begin{itemize}
\item the positivity of the density operators $\rho$ and associated trace-class operators, or their decomposition into positive and negative parts;
\item a particular commutation property for our specific system, namely that the Lindblad operator $\bL$ describing such  $k$-photon exchanges is such that  $[\bL,\bL^\dagger]$ is positive definite {\as (see~\eqref{eq:M} and  \eqref{eq:trLPIL} in the proof of Lemma 3);}
\item building an adapted  nuclear norm from the Lindblad operator $\bL$ {\as (see lemma~\ref{lem:NormL})};
\item exploiting this norm via the Hille-Yosida theorem for well-posedness, and via the Lyapunov function $\tr{\bL \rho\bL^\dag}$ for  convergence analysis;
\item  a density and duality argument to prove the existence of  $k^2$ invariant bounded operators, followed by particular Fourier-transform-like insight to explicitly identify some of them.
\end{itemize}
Preliminary results, for the case $k=2$ and without investigating the well-posedness, are available in \cite{ourCDC}.

%%%%%%%%%%%%%%%%%
%%%  SETTING  %%%
%%%%%%%%%%%%%%%%%

\section{Lindblad model for harmonic oscillator with $k$-photon exchange}

%C: Harmonic oscillator Hilbert state
The harmonic oscillator is the most basic model of a quantum system on an infinite-dimensional Hilbert space \cite{HarocheBook}. Starting with the canonical orthogonal basis $\{\ket{n}\}_{n\in\NN}$ (Fock basis) corresponding to the photon-number states, i.e.~$\q{\psi} = \q{n}$ corresponds to an oscillator state with exactly $n$ quanta of oscillation, we define the separable Hilbert space
\begin{equation}\label{eq:H}
\cH=\bigg\{\ket{\psi}=\sum_{n\in\NN} \psi_n \ket{n}~\Big|~\psi_n\in\CC,~\sum_{n\in\NN} |\psi_n|^2 < +\infty   \bigg\}
\end{equation}
equipped with the usual Hermitian product $\bket{\psi|\phi}= \sum_{n\in\NN} {\as \bar\psi_n} \phi_n $ between $\ket{\psi}=\sum_{n\in\NN} \psi_n \ket{n}$ and $\ket{\phi}=\sum_{n\in\NN} \phi_n \ket{n}$, where {\as$\bar z$} denotes the complex conjugate of $z \in \mathbb{C}$.

%C: density operator
The set of Hermitian trace-class operators on $\cH$ is denoted by $\Ktr$. Any $\rho\in\Ktr$ is a compact Hermitian operator admitting a spectral decomposition
$$
\rho= \sum_{\mu\geq 1} \lambda_\mu  \ket{\psi_\mu}\bra{\psi_\mu}
$$
with $\{\ket{\psi_\mu}\}_{\mu\geq 1}$ a Hilbert basis of $\cH$, $\lambda_\mu\in\RR$ and $\sum_{\mu\geq 1} |\lambda_\mu| < +\infty$. The state of an open quantum system is described by a \emph{density operator}, which must belong to the set
$$ \cD = \left\{ \rho \in \Ktr~\Big| \sum_{\mu\geq 1} \lambda_\mu = 1, \, ;\; \lambda_\mu \geq 0 \text{ for all } \mu\geq 1 \,\right\} $$
of Hermitian positive semidefinite operators with trace one. An interpretation is that $\lambda_\mu$ gives the probability of the quantum system to be in the corresponding state $\ket{\psi_\mu}$, and more generally $\bra{\psi} \rho \ket{\psi}$ is the probability that the system behaves as if it was in the state $\ket{\psi}$, for any normalized $\ket{\psi}$ on $\cH$.

%C: context about classical harmonic oscillator
{\as The classical harmonic oscillator has a state space $\mathbb{R}^2$, and identifying $\mathbb{R}^2$ with $\mathbb{C}$ its solution set is often written $\{ x(t) = \alpha \, e^{i\omega t}  |  \alpha \in \mathbb{C} \}$, with $\alpha$ encoding oscillation amplitude and initial phase, while $\omega$ is the fixed eigenfrequency of the harmonic oscillator. In a reference frame that rotates as $e^{i\omega t}$, the solutions boil down to $\tilde{x}(t) = \alpha$.
For a classical harmonic oscillator with damping and driving input, it is still insightful to write the dynamics in the rotating frame, and now $\tilde{x}(t)$ will vary in time, for instance asymptotically converging towards a unique constant $\tilde{\alpha} \in \mathbb{C}$ if the input is driving the harmonic oscillator at its resonance frequency $\omega$; one can take $\alpha$ real positive by an appropriate choice of phase reference for this input drive.

%C: towards Mazyar: defining cat states and associated operators
The model of the quantum harmonic oscillator below is precisely the quantum analog of this situation, i.e.~it describes the evolution of the harmonic oscillator of eigenfrequency $\omega$ and driven at this eigenfrequency\footnote{This focus on drives at the eigenfrequency results from the so-called ``Rotating Wave Approximation''.}, in a reference frame that rotates at this very same frequency.}
The closest quantum equivalent to a classical harmonic oscillator state of amplitude $\alpha \in \mathbb{C}$ is the so-called coherent state
$$\q{\alpha}=e^{-\frac{|\alpha|^2}{2}} \sum_{n=0}^\infty \frac{\alpha^n}{\sqrt{n!}}\ket{n} \, .$$
It is characterized by its invariance under the photon annihilation operator $\ba$, which is defined by $\ba\ket{n}=\sqrt{n}\ket{n-1}$ for $n >0$ and $\ba \ket 0=0$, {\as with domain
${\cal H}_{1} = \{\ket{\psi}=\sum_{n\in\NN} \psi_n \ket{n}~|~\psi_n\in\CC,~\sum_{n\in\NN} n|\psi_n|^2 < +\infty   \}$. One indeed checks that $\q{\alpha} \in {\cal H}_{1}$ and} $\ba \q{\alpha} = \alpha \q{\alpha}$. {\as In this paper, we emphasize that notation $\q{\alpha}$ (eventually with a subscript) will always refer to a coherent state while a Roman letter or numeral like $\ket{n}$ or $\ket{n-1}$ will always refer to a canonical basis state.} For any given integer $k \geq 1$, the aim of the reservoir engineered in \cite{MirrahimiCatComp2014,legthas2015science} is to stabilize a manifold of so-called ``$k$-legged Schr\"odinger cat states''. Such a state is in fact a linear superposition of coherent states, of the form
\begin{equation}\label{eq:defcat}
\q{\psi} = \frac{1}{\vartheta} \sum_{m=1}^k \; e^{i\theta_m} \; \q{\alpha_m} \; , \qquad \alpha_m = \alpha\, e^{2i\pi m/k}
\end{equation}
where the $\theta_m$ are fixed quantum phases; $\alpha$ can be taken real positive by an appropriate choice of reference frame; and $\vartheta$ is a normalization constant that depends on $\alpha$ and $\{\theta_1,...,\theta_k \}$. {\as The coherent superposition of different ``classical'' harmonic oscillator solutions at the same time --- i.e. $\q{\alpha_1} + \q{\alpha_2} \neq \q{\alpha_1+\alpha_2}$ --- is an intrinsic feature of quantum mechanics.}

%C: The Lindblad equation
The reservoir proposed in \cite{MirrahimiCatComp2014,legthas2015science} to stabilize such states of the form \eqref{eq:defcat}, can be modeled by the following Lindblad master equation for the evolution of the quantum harmonic oscillator's state:
\begin{equation}\label{eq:mod}
\tfrac{d}{dt} \rho = \mathfrak{L}_{\as \bL}(\rho) = \bL \rho \bL^\dagger - \frac{1}{2} \bL^\dagger \bL \rho - \frac{1}{2} \rho \bL^\dagger \bL \, \quad \text{with }\;\; \bL=\ba^k-\alpha^k\bI \;\; {\as , \alpha \in \mathbb{R},\; k \in \{1,2,3,...\} \; .}
\end{equation}
{\as For $k=1$, it is a standard result that \eqref{eq:mod} is equivalent to a classical damped and driven harmonic oscillator, whose solution asymptotically converges to the coherent state $\q{\alpha}$. Quantum mechanics allows more complicated damping and drive operators, here thus with $k>1$ and leading to other steady states specific to quantum mechanics.} Note that the operator $\bL$ is unbounded, for any $\alpha \in \mathbb{R}$ and any $k\geq 1${\as . We denote its domain ${\cal H}_{k} = \{
\ket{\psi}=\sum_{n\in\NN} \psi_n \ket{n}~| \psi_n \in \mathbb{C},\; \sum_{n\in\NN} n^{k} |\psi_n|^2 < +\infty \}
\}$, and the domain of ${\bL^\dag \bL}$ is then ${\cal H}_{2k}$. These domains contain the coherent states $\ket{\alpha}$, for any $k \geq 1$.} In this paper we prove the well-posedness and convergence properties of this equation \eqref{eq:mod}, in the infinite-dimensional setting associated to the quantum harmonic oscillator Hilbert space $\cH$ as defined in \eqref{eq:H}.\\

{\as \paragraph{\it Remark about notation:} We have kept the usual physicists' notation for quantum mechanics. For a given Hilbert space $\cal H $, we denote by a ket $\ket{\bullet}$ an element of $\cal H$ representing a quantum state. A bra $\bra{\bullet}$ is an element of the dual space $\cal H^*$. In particular, in finite dimension, one can see a ket $\ket{\bullet}$ as a column vector while a bra $\bra{\bullet}$ represents a row vector. For any operator $O$, the scalar product $\bra{x}O^\dag O \ket{x}$ is equivalent to the quadratic norm $||O \ket{x}||^2$ in the Hilbert space sense.
For the quantum harmonic oscillator, in a less abstract formulation one may choose $\mathcal{H}=L^2(\mathbb{R})$, with $\mathbb{R}$ corresponding e.g.~to position of a mechanical oscillator. The canonical Hilbert basis is then formed by the Hermite functions, describing the distribution in position for an eigenfunction of the Schr\"odinger equation for the quantum harmonic oscillator. In this representation, the annihilation operator $\ba=\frac{1}{\sqrt{2}}\left( \partial_x + x \right)$ and equation \eqref{eq:mod} can be interpreted as a partial differential equation with order $2k$ derivation in $x$ -- the standard Laplace operator is retrieved for $k=1$.}

%%%%%%%%%%%%%%%%%%%%
%%%  WELL-POSED  %%%
%%%%%%%%%%%%%%%%%%%%

\section{Well-posedness}\label{sec:WP}

%C: notation, Ktr, Kf
In the sequel, for any $\rho= \sum_{\mu\geq 1} \lambda_\mu  \ket{\psi_\mu}\bra{\psi_\mu} \in \Ktr$ we use the notation
$$
\rho^+ =  \sum_{\mu\geq 1} \max(0,\lambda_\mu)  \ket{\psi_\mu}\bra{\psi_\mu} , \quad \rho^-=\sum_{\mu\geq 1} \max(0,-\lambda_\mu)  \ket{\psi_\mu}\bra{\psi_\mu} \, .
$$
Thus $\rho=\rho^+ - \rho^-$ and $|\rho|= \rho^+ + \rho^-$. Equipped with the trace-norm
$$
\|\rho\|_{tr} = \tr{|\rho|}=\sum_{\mu=1}^{\infty} |\lambda_\mu| \, ,
$$
$\Ktr$ is a Banach space.
For any $\rho\in\Ktr$ and any bounded operator $\bB$ on $\cH$,  the operators $\bB \rho$ and $\rho \bB$ are trace-class operators and
\begin{equation}\label{eq:rhoB}
\tr{ \bB \rho } = \tr{\rho \bB }, \quad \tr{\bB \rho} \leq \tr{|\bB \rho|} = \| \bB \rho \|_{tr} \leq
\|\bB\|~ \tr{|\rho|} = \| \bB \|\; \|\rho\|_{tr}
\end{equation}
with the standard induced operator norm
$$\| \bB \| = \max\{ \| \bB \ket{\psi} \|_{\cH} \; \Big| \;\ket{\psi} \in \cH, \; \| \ket{\psi} \|_{\cH}=1 \}\,.$$
We denote by $\cH^f$ the sub-space of $\cH$ associated to $\ket\psi$ involving a finite number of photons {\as i.e. a finite linear combination of vectors in the canonical basis}:
$$
\cH^f=\bigg\{\ket{\psi}=\sum_{n\in\NN} \psi_n \ket{n}~\Big|~\psi_n\in\CC,~\exists \bar n \text{ such that } \psi_n=0 \text{ for } n>\bar n  \bigg\}\, .
$$
$\cH^f$ is dense in $\cH$.
We denote also by $\Kf$, the subspace of $\Ktr$ of operators whose range is included in a vector space spanned by a finite number {\as of vectors in the canonical basis}:
$$
\Kf=\left\{ \sum_{n=1}^{\bar n}\;\sum_{n'=1}^{\bar n}\; f_{n,n'} \ket{n}\bra{n'}~ \Big| ~f_{n,n'}\in\mathbb{C},\;\bar n \in \mathbb{N}_{>0} \right\}\, .
$$
$\Kf$ is dense in the Banach space $\Ktr$. More details about these operator spaces can be found in~\cite{TarasovBook2008}. {\as The domain of an operator ${\bf A}$ on $\cH$ is the set ${\cal D_{\bf A}} =\{ \q{\psi} \in \cH ~\Big|~ {\bf A}\q{\psi} \in \cH \} \subset \cH$; we will not specify it explicitly for each case. When manipulating operators in the sequel, we often investigate the behavior of the operators on $\cH^f$ and conclude by density. In such discussion, the reader can thus consider the restriction of the operator domains to $\cH^f$ in order to ensure that everything is well-defined. When we are not restricted to $\cH^f$, commutation relations must also be treated with care, e.g.~ensuring that the assumptions around \eqref{eq:rhoB} are satisfied or using some other arguments which we then specify.}

%C: photon-number preliminaries
The photon-number operator is defined by $\bN = \sum_{n\in\NN} \, n \, \ket{n}\bra{n}$ and it satisfies $\ba^\dagger \ba = \bN$, whereas $\ba \ba^\dagger = \bN + \bI = \sum_{n\in\NN} \, (n+1)\, \ket{n}\bra{n}$. For any $\nu\in\NN$, we denote $(\bN-\nu \bI)^+$ the Hermitian operator defined by $(\bN-\nu \bI)^+\ket n= (n-\nu)\ket n$, for $n> \nu$ and $(\bN-\nu \bI)^+\ket n=0$ for $n\in\{0,\ldots,\nu\}$. Then explicit computations show that, {\as in ${\cal H}_{2k}$, we have}
\begin{equation}\label{eq:M}
[\bL,\bL^\dag]\equiv \bL\bL^\dag - \bL^\dag \bL =  \ba^k(\ba^\dag)^k - (\ba^\dag)^k \ba^k =  \bM
\end{equation}
where $\bM= (\bN+ \bI)(\bN+2\bI) \ldots (\bN+k\bI)\;\, - \;\,\bN (\bN- \bI)^+ \ldots (\bN-(k-1)\bI)^{+}$ is a positive Hermitian operator {\as of domain ${\cal H}_{2k}$}, unbounded but diagonal in the Fock basis
$\{\ket{n}\}_{n\in\NN}$.

%%%IN: Lemma 1
\begin{lem} \label{lem:MaxMonotone}
The operator $\bL^\dag \bL$   {\as with domain ${\cal H}_{2k}$} admits a spectral decomposition $\bL^\dag\bL = \sum_{\mu=1}^{\infty}d_\mu  \ket{g_\mu}\bra{g_\mu}$  where $\big(\ket{g_\mu}\big)_{\mu \geq 1}$ is an Hilbert basis of $\cH$ and $d_\mu \geq 0$.
%AS: only if k>1, so do we use this later?
%and $\sum_{\mu\geq 0} \frac{1}{1+d_\mu} < +\infty$.
\end{lem}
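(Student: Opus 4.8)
The plan is to write $\bL^\dag\bL$ as an explicitly diagonal, nonnegative self-adjoint operator with compact resolvent, plus a perturbation that is negligible at large photon number, and then to invoke the spectral theorem for self-adjoint operators with compact resolvent. Expanding and using $\alpha\in\RR$ gives
\begin{equation*}
\bL^\dag\bL = (\ba^\dag)^k\ba^k - \alpha^k\big((\ba^\dag)^k + \ba^k\big) + \alpha^{2k}\,\bI =: \bD + \bP,
\end{equation*}
where $\bD=(\ba^\dag)^k\ba^k$ is diagonal in the Fock basis, with $\bD\ket n = \tfrac{n!}{(n-k)!}\ket n$ for $n\geq k$ and $\bD\ket n = 0$ for $n<k$, while $\bP=-\alpha^k((\ba^\dag)^k+\ba^k)+\alpha^{2k}\bI$ gathers the off-diagonal and bounded terms. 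Since the eigenvalues $\tfrac{n!}{(n-k)!}\sim n^k$ of $\bD$ diverge, $\bD$ is self-adjoint on exactly ${\cal H}_{2k}=\{\ket\psi\ :\ \sum_n n^{2k}|\psi_n|^2<\infty\}$, nonnegative, and has compact resolvent; the whole point is that $\bP$ is of lower order in $\bN$ and can be absorbed without changing these structural features.

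I would first establish self-adjointness via Kato--Rellich. The symmetric terms in $\bP$ grow only like $n^{k/2}$: indeed $\|\ba^k\ket\psi\|_{\cH}^2 = \sum_n \tfrac{n!}{(n-k)!}|\psi_n|^2 = \bket{\psi|\bD|\psi}$ and $\|(\ba^\dag)^k\ket\psi\|_{\cH}^2=\sum_n \tfrac{(n+k)!}{n!}|\psi_n|^2$, each of order $n^k$ against the order $n^{2k}$ of $\|\bD\ket\psi\|_{\cH}^2$. Using the elementary bound $x\leq \varepsilon x^2 + \tfrac{1}{4\varepsilon}$ termwise then yields, for every $\varepsilon>0$, an estimate $\|\bP\ket\psi\|_{\cH}\leq \varepsilon\,\|\bD\ket\psi\|_{\cH} + C_\varepsilon\,\|\ket\psi\|_{\cH}$ on ${\cal H}_{2k}$, so $\bP$ is $\bD$-bounded with relative bound $0$. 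By Kato--Rellich, $\bL^\dag\bL=\bD+\bP$ is self-adjoint on the unchanged domain ${\cal H}_{2k}$ and bounded below; nonnegativity even follows directly from $\bket{\psi|\bL^\dag\bL|\psi}=\|\bL\ket\psi\|_{\cH}^2\geq 0$.

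It then remains to upgrade this to a pure point spectrum. I would show that $\bP$ is $\bD$-compact by checking that $\bP(\bD+\bI)^{-1}$ is compact: on $\ket n$ it produces vectors whose Fock components scale like $n^{k/2}/(n^k+1)\to 0$, and being a banded operator with entries vanishing at infinity it is a norm limit of its finite-rank Fock-basis truncations. Relatively compact perturbations leave the essential spectrum invariant, so $\sigma_{\mathrm{ess}}(\bL^\dag\bL)=\sigma_{\mathrm{ess}}(\bD)=\emptyset$; hence $\bL^\dag\bL$ has discrete spectrum accumulating only at $+\infty$ and compact resolvent $(\bL^\dag\bL+\bI)^{-1}$. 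Applying the spectral theorem to this compact self-adjoint resolvent produces a Hilbert basis $(\ket{g_\mu})_{\mu\geq 1}$ of $\cH$ of eigenvectors, with eigenvalues $d_\mu\geq 0$ forced by nonnegativity, which is exactly the claimed decomposition $\bL^\dag\bL=\sum_\mu d_\mu\ket{g_\mu}\bra{g_\mu}$.

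The routine parts --- the diagonalization of $\bD$, the termwise inequalities, and the positivity --- are straightforward; I expect the main obstacle to be the functional-analytic bookkeeping that makes everything rigorous on the correct domain. In particular one must verify that ${\cal H}_{2k}$ really is the self-adjointness domain (rather than merely a core of a symmetric operator) and that the perturbation is genuinely relatively \emph{compact} rather than merely relatively bounded, since it is only the compactness that rules out continuous spectrum and guarantees an honest eigenbasis.
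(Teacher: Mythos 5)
Your proof is correct, but it follows a genuinely different route from the paper. You split the operator itself, $\bL^\dag\bL=\bD+\bP$ with $\bD=(\ba^\dag)^k\ba^k$ diagonal, and then invoke the standard perturbation-theory package: Kato--Rellich (relative bound $0$) for self-adjointness on ${\cal H}_{2k}$, relative compactness of $\bP$ plus Weyl's theorem to kill the essential spectrum, and finally the spectral theorem applied to the compact resolvent. The paper instead works directly at the resolvent level and stays entirely self-contained: it writes $\bI+\lambda\bL^\dag\bL=\bigl(\bI-\lambda\alpha^k((\ba^\dag)^k+\ba^k)\bR^{-1}\bigr)\bR$, where $\bR$ is the diagonal part of $\bI+\lambda\bL^\dag\bL$, proves by a termwise estimate (the same $n^{k}$-versus-$n^{2k}$ dominance you exploit) that the off-diagonal factor $\lambda\alpha^k((\ba^\dag)^k+\ba^k)\bR^{-1}$ has norm $c\lambda<1$ for $\lambda$ small, inverts it by a Neumann series, and obtains $(\bI+\lambda\bL^\dag\bL)^{-1}$ as a compact Hermitian operator with spectrum in $[0,1]$, from which the eigenbasis and $d_\mu=(1/s_\mu-1)/\lambda\geq 0$ follow. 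Both arguments rest on exactly the same structural fact --- the diagonal part of $\bL^\dag\bL$ grows like $n^{2k}$ while the drive terms contribute only order $n^{k}$ --- so neither has a real technical advantage there. What yours buys is conceptual economy and an explicit statement that ${\cal H}_{2k}$ is the genuine self-adjointness domain (the paper leaves this somewhat implicit); what the paper's buys is independence from named theorems (Kato--Rellich, Weyl) and explicit quantitative control of the resolvent for all small $\lambda>0$, in a form (boundedness, positivity, diagonalization of $(\bI+\lambda\bL^\dag\bL)^{-1}$) that is reused verbatim in the subsequent accretivity lemmas for the Hille--Yosida argument.
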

\begin{proof}
We construct the spectral decomposition of $\bL^\dag \bL$ from {\as the one of the resolvent, i.e.~the} inverse of $\bI +\lambda \bL^\dag\bL$ for $\lambda >0$ small enough. {\as Let us define $\bR=\bI + \lambda\bigg(\bN (\bN- \bI)^+ \ldots (\bN-{\as (k-1)\bI})^{+} + \alpha^{2k} {\as \bI} \bigg)$ with domain ${\cal H}_{2k}$. Obviously, $\bR$ is Hermitian and positive, as by definition it is diagonal in the Fock basis $\{\ket{n}\}_{n\in\NN}$. This diagonal form allows to define the inverse $\bR^{-1}: \cH \rightarrow \cH$ as an operator on $\cH$, diagonal in the Fock basis, whose spectrum is bounded and decays to zero; thus $\bR^{-1}$ is compact. This inverse satisfies $\bR \bR^{-1} = \bI$ on $\cH$ and $\bR^{-1} \bR = \bI$ on ${\cal H}_{2k}$.

The operator $\bI +\lambda \bL^\dag\bL$ has the same domain ${\cal H}_{2k}$ for any $\lambda >0$. Then noting that
   $
   \bL^\dag \bL = \bN (\bN- \bI)^+ \ldots (\bN-{\as (k-1)\bI})^{+} + \alpha^{2k}{\as \bI} - \alpha^{k} ((\ba^\dag)^k + \ba^k) \, ,
   $
we can write
\begin{equation}\label{eq:I+lL}
\bI +\lambda \bL^\dag\bL = \bR - \lambda \alpha^k  \Big((\ba^\dag)^k + \ba^k\Big) = \bigg(\bI - \lambda \alpha^k  \Big((\ba^\dag)^k + \ba^k\Big)  \bR^{-1}\bigg) \bR
\end{equation}
where the domain of $\bI - \lambda \alpha^k  \Big((\ba^\dag)^k + \ba^k\Big)$ is also ${\cal H}_{2k}$.}
We now work towards inverting the product on the right side of~\eqref{eq:I+lL}.\vspace{2mm}

Take $\ket\psi=\sum_{n\in\NN} \psi_n \ket n $ in $\cH^f {\as \subset \cH_{2k}}$.
Then
\begin{multline*}
 ((\ba^\dag)^k + \ba^k)\ket\psi
 = \sum_{n\geq 0}  \sqrt{(n+1)\ldots (n+k)}\psi_{n} \ket{n+k}
 +\sum_{n\geq k} \sqrt{n(n-1)\ldots (n-k+1)}\psi_{n}\ket{n-k}
 %\\
 % = \sum_{n\geq k}  \sqrt{n(n-1)\ldots (n-k+1)}(\psi_{n-k}+\psi_{n+k})  \ket{n} + \sum_{n=0}^{k-1}   \sqrt{n(n-1)\ldots (n-k+1)}\psi_{n+k} \ket{n}
\end{multline*}
and a few computations lead to the bound
\begin{equation}\label{eq:bound1}
 \bket{\psi|((\ba^\dag)^k + \ba^k)^2|\psi } \leq 2\sum_{n\geq 0}(  (n+k)^k + n^k ) |\psi_n|^2
 .
\end{equation}
Clearly $\ket{\phi}=\bR^{-1}\ket \psi= \sum_{n\geq 0} \phi_n \ket n$ belongs also to $\cH^f$ since
$
\phi_n= \frac{\psi_n}{1+ {\as \lambda (n(n-1)\ldots(n-k+1)} + \alpha^{2k})} \, .
$
Inserting this into \eqref{eq:bound1} gives for any $\ket\psi\in\cH^f$,
%AS: obvious
%We have
%$$
% \bket{\phi|((\ba^\dag)^k + \ba^k)^2|\phi } \leq 2\sum_{n\geq 0} \frac{  (n+k)^k + n^k }{\left(1+ \lambda (\max(0,n(n-1)\ldots(n-k+1)) + \alpha^{2k})\right)^2} |\psi_n|^2
% .
%$$
\begin{multline*}
  \lambda^2 \alpha^{2k}   \bket{\psi\left|\bR^{-1}\Big((\ba^\dag)^k + \ba^k\Big)^2\bR^{-1}\right|\psi}
\leq
\sum_{n\geq 0} \frac{ 2 \lambda^2\alpha^{2k} ( (n+k)^k + n^k) }{\left(1+ \lambda {\as (n(n\text{-}1)\ldots(n\text{-}k+1)} + \alpha^{2k})\right)^2} |\psi_n|^2
.
\end{multline*}
A rough estimation shows that there exists a constant $c>0$ such that $\forall n\in\NN$ {\as and for all $\lambda > 0$}, we have
$$
\frac{ 2 \lambda^2\alpha^{2k} ( (n+k)^k + n^k) }{\left(1+ \lambda ({\as n(n-1)\ldots(n-k+1)} + \alpha^{2k})\right)^2} \leq  c \lambda
.
$$
{\as Thus for all $\ket\psi \in\cH^f$ we have
$$
\ket\psi \in\cH, \quad \lambda^2 \alpha^{2k}   \bket{\psi\left|\bR^{-1}\Big((\ba^\dag)^k + \ba^k\Big)^2\bR^{-1}\right|\psi}
\leq c\lambda \bket{\psi|\psi}.
$$
From this, we can define $\lambda \alpha^k   \Big((\ba^\dag)^k + \ba^k\Big)\bR^{-1}$ as a bounded operator on $\cH^f$, and for $\lambda < 1/c$ its norm is strictly less than one. By density of $\cH^f$ in $\cH$, defining a Cauchy sequence in $\cH^f$ for $j=1,2,... : \ket{\psi_j} \rightarrow \ket{\psi} \in \cH$ with $\bket{ n | \psi_j} = \bket{ n | \psi}$ for all $n \leq j$, $\bket{ n | \psi_j} = 0$ for all $n > j$, we can extend the domain of $\lambda \alpha^k   \Big((\ba^\dag)^k + \ba^k\Big)\bR^{-1}$ to all of $\cH$, with still a norm $<1$ for $\lambda < 1/c$.}

This implies that  $\bigg(\bI - \lambda \alpha^k  \Big((\ba^\dag)^k + \ba^k\Big)\bR^{-1}\bigg)$ is bounded and admits a bounded inverse on $\cH$. We can then define
$$
(\bI +\lambda \bL^\dag\bL)^{-1} =\bR^{-1} \bigg(\bI - \lambda \alpha^k   \Big((\ba^\dag)^k + \ba^k\Big)\bR^{-1}\bigg)^{-1}
$$
which is the product of a bounded operator with a compact operator, hence it is compact {\as with domain $\cH$}. Thus  $(\bI +\lambda \bL^\dag\bL)^{-1}$  is a compact  Hermitian operator:  it admits a spectral decomposition. By a $\cH^f$ density argument, we have $0 \leq \bra{\psi} (\bI +\lambda \bL^\dag\bL)^{-1} \ket{\psi} \leq \bket{\psi | \psi}$ for all $\q{\psi} \in \cH$. Taking all this together, we can then write
$$
(\bI +\lambda \bL^\dag\bL)^{-1} = \sum_{\mu \geq 1} s_\mu \ket{g_\mu}\bra{g_\mu}
$$
with $0 \leq s_\mu \leq 1$ and $\{\ket{g_\mu}\}_{\mu \geq 1}$ a Hilbert basis {\as of $\cH$}. We conclude with $d_\mu =(1/s_\mu  -1)/\lambda$.

\end{proof}
%%%OUT: Lemma 1

%%%IN: Lemma 2
Via the spectral decomposition of Lemma~\ref{lem:MaxMonotone}, $\bS = \sqrt{\bI+ \bL^\dag\bL}$ is well defined by
$$
\bS =  \sqrt{\bI+ \bL^\dag\bL}= \sum_{\mu=1}^{\infty} \sqrt{1+d_\mu} \,  \ket{g_\mu}\bra{g_\mu}\, .
$$
Denote by
$$
\KL= \left\{ \rho \in\Ktr~\Big|~ \tr{\left|\bS  \rho \bS \right| }  < +\infty \right\}
.
$$
We have the following lemma.
\begin{lem}\label{lem:NormL}
$\KL$ equipped with the norm
\begin{equation}\label{eq:NormL}
  \Lnorm{\rho} =\tr{\left|\bS  \rho \bS \right| }
\end{equation}
is a Banach space. Moreover $\rho \in \KL$  implies $\bL\rho\bL^\dag \in\Ktr$.
\end{lem}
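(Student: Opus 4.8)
The plan is to realize $\KL$ as an isometric copy of the already-known Banach space $\Ktr$, exploiting that the operator $\bS$ from Lemma~\ref{lem:MaxMonotone} is positive with spectrum bounded below by $1$, so that it admits the bounded inverse $\bS^{-1}=\sum_{\mu\geq 1}(1+d_\mu)^{-1/2}\ket{g_\mu}\bra{g_\mu}$ with $\|\bS^{-1}\|\leq 1$. Consider the linear map $\Phi:\rho\mapsto\bS\rho\bS$. For $\rho\in\KL$ the operator $\sigma:=\bS\rho\bS$ is Hermitian and, by the very definition of $\KL$, satisfies $\|\sigma\|_{tr}=\Lnorm{\rho}<+\infty$; hence $\Phi$ maps $\KL$ into $\Ktr$ and is by construction isometric for the two norms. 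I would then note that $\Lnorm{\cdot}$ is indeed a norm: homogeneity and the triangle inequality are inherited from $\|\cdot\|_{tr}$ through the linearity of $\Phi$, while definiteness holds because $\bS$ is injective, so $\bS\rho\bS=0$ forces $\rho=\bS^{-1}\,0\,\bS^{-1}=0$.

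The substance of the first statement is then completeness, which I obtain by exhibiting a two-sided inverse of $\Phi$. Given any $\sigma\in\Ktr$, set $\Psi(\sigma):=\bS^{-1}\sigma\bS^{-1}$; this is manifestly well-defined as a product of the bounded operator $\bS^{-1}$ with a trace-class operator, it is Hermitian and trace-class, and it satisfies $\bS\,\Psi(\sigma)\,\bS=\sigma$, so $\Psi(\sigma)\in\KL$ with $\Lnorm{\Psi(\sigma)}=\|\sigma\|_{tr}$. Thus $\Phi$ is a bijective isometry onto the complete space $\Ktr$, and completeness of $\KL$ follows at once. Equivalently, any $\Lnorm{\cdot}$-Cauchy sequence $\rho_j$ gives a $\|\cdot\|_{tr}$-Cauchy sequence $\bS\rho_j\bS$ with limit $\sigma\in\Ktr$, and $\bS^{-1}\sigma\bS^{-1}$ is its $\KL$-limit.

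For the second statement the key observation is that $\bL$ is \emph{tamed} by $\bS^{-1}$. Since $\bS^2=\bI+\bL^\dag\bL$ we have $\bL^\dag\bL\leq\bS^2$, whence for every $\ket{\psi}$ in the relevant domain
\[
\|\bL\ket{\psi}\|_{\cH}^2=\bra{\psi}\bL^\dag\bL\ket{\psi}\leq\bra{\psi}\bS^2\ket{\psi}=\|\bS\ket{\psi}\|_{\cH}^2 .
\]
Substituting $\ket{\psi}=\bS^{-1}\ket{\phi}$ shows that $\bB:=\bL\bS^{-1}$ extends to a bounded operator on all of $\cH$ with $\|\bB\|\leq 1$, and $\bB^\dag=\bS^{-1}\bL^\dag$ is likewise bounded. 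I would then write, for $\rho\in\KL$ and $\sigma=\bS\rho\bS\in\Ktr$,
\[
\bL\rho\bL^\dag=(\bL\bS^{-1})(\bS\rho\bS)(\bS^{-1}\bL^\dag)=\bB\,\sigma\,\bB^\dag ,
\]
which is a product of the trace-class operator $\sigma$ with bounded operators and is therefore trace-class, with the quantitative bound $\|\bL\rho\bL^\dag\|_{tr}\leq\|\bB\|^2\,\|\sigma\|_{tr}\leq\Lnorm{\rho}$ obtained from \eqref{eq:rhoB}; moreover it is Hermitian since $\rho$ is. This yields $\bL\rho\bL^\dag\in\Ktr$.

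The step I expect to demand the most care is not any individual estimate but the rigorous justification of the operator identities above in the presence of the unbounded $\bL$ and $\bS$: the factorizations $\rho=\bS^{-1}\sigma\bS^{-1}$ and $\bL\rho\bL^\dag=\bB\sigma\bB^\dag$ amount to inserting $\bS^{-1}\bS=\bI$ and $\bS\bS^{-1}=\bI$, which hold only on their proper domains (the identification $\mathcal{D}_{\bS}=\mathcal{D}_{\bL}$ being needed so that $\bB$ is everywhere defined). Following the paper's convention, I would verify these identities first on the dense subspace $\cH^f$, equivalently on $\Kf$, where every operator acts unambiguously, and then extend to all of $\cH$ by continuity, using the boundedness of $\bS^{-1}$ and $\bB$ already established.
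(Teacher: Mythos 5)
Your proposal is correct and follows essentially the same route as the paper: the first statement via the isometry $\rho\mapsto\bS\rho\bS$ between $\KL$ and $\Ktr$, and the second by writing $\bL\rho\bL^\dag=(\bL\bS^{-1})(\bS\rho\bS)(\bS^{-1}\bL^\dag)$ with $\bL\bS^{-1}$ bounded of norm at most one (the paper's operator $\bBl$). The only cosmetic differences are that you establish $\|\bL\bS^{-1}\|\leq 1$ from the operator inequality $\bL^\dag\bL\leq\bS^2$ rather than from the spectral computation $\|\bBl\ket{g_\mu}\|^2=d_\mu/(1+d_\mu)$, and you get boundedness of the adjoint for free instead of via the commutation identity $(\bI+\bL^\dag\bL)^{-1}\bL^\dag=\bL^\dag(\bI+\bL\bL^\dag)^{-1}$.
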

\begin{proof}
The first statement holds since the operation {\as $\rho \mapsto \bS^{-1} \rho \bS^{-1}$} is an isometry mapping the Banach space $\Ktr$, equipped with the trace norm, to the space $\KL$ equipped with the norm $\Lnorm{\rho}$.

{\as For the second statement, consider the operator $\bBl = \bL (\bI+\bL^\dag\bL)^{-1/2}$.  Consider the spectral decomposition of $\bL^\dag\bL$ given in lemma~\ref{lem:MaxMonotone}: for any $\mu$, we have $(\bI+\bL^\dag\bL)^{-1/2}\ket{g_\mu} = \sqrt{\frac{1}{1+d_\mu}} \ket{g_\mu}$. Thus
$$
\|\bBl\ket{g_\mu}\|^2=
\bra{g_\mu}(\bI+\bL^\dag\bL)^{-1/2} \bL^\dag \bL (\bI+\bL^\dag\bL)^{-1/2}\ket{g_\mu}=\frac{d_\mu}{1+d_\mu}
$$
Since $(\ket{g_\mu})_{\mu\geq 1}$ is a Hilbert basis of $\cH$, $\bBl$ is a bounded operator with $\|\bBl\|\leq 1$.
Similarly $\bBl^\dag$ is also a bounded operator of norm  not exceeding one: this results from lemma~\ref{lem:MaxMonotone} and from $\bBl \bBl^\dag= (\bI+\bL\bL^\dag)^{-1}\bL\bL^\dag$ based on  the identity
$(\bI+\bL^\dag\bL)^{-1} \bL^\dag = \bL^\dag (\bI+\bL\bL^\dag)^{-1}$

Take now   $\rho \in \KL$: the operator  $\sigma = (\bI+\bL^\dag\bL)^{1/2}\rho (\bI+\bL^\dag\bL)^{1/2}$ is  thus a trace class operator. Since
$\bL\rho\bL^\dag = \bBl \sigma \bBl^\dag$
with $\bBl$ and $\bBl^\dag$ bounded operators with norm less or equal to one, we have following~\eqref{eq:rhoB},  $\|\bL\rho\bL^\dag\|_{tr} \leq  \|\sigma\|_{tr}$. }
\end{proof}
%%%OUT: Lemma 2

The above considerations put us on track towards the following result.

%% MAIN THEOREM 1
\begin{thm}\label{thm:Cauchy}
Consider the Cauchy problem~\eqref{eq:mod} associated to  the super-operator $\fA$,
$$
\rho \mapsto \fA(\rho)=  (\bL^\dag\bL \rho + \rho \bL^\dag\bL )/2 - \bL\rho\bL^\dag
.
$$
For any integer $k >0$, any real $\alpha >0$ {\as and any  $\rho_0$ in the domain of $\fA$, there exists a unique $C^1$ function $[0,+\infty[\ni t \mapsto \rho(t)\in\KL$, such that $\rho(t)$ belongs to the domain of $\fA$ for all $t\geq 0$
and solves the initial value problem~\eqref{eq:mod} with $\rho(0)=\rho_0$.}
Moreover, for all $t$ positive,  $\tr{\rho(t)} = \tr{\rho_0}$,  $\Lnorm{\rho(t)} \leq  \Lnorm{\rho_0}$ {\as and $\Lnorm{\fA(\rho(t))}\leq \Lnorm{\fA(\rho_0)}$. If $\rho_0$ is non-negative then $\rho(t)$ remains also non negative. }
\end{thm}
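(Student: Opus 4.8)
The plan is to realise the super-operator $-\fA$ as the infinitesimal generator of a strongly continuous contraction semigroup on the Banach space $\KL$ and then invoke the Hille--Yosida (Lumer--Phillips) theorem. Concretely, I would take as domain of $\fA$ the operators $\rho\in\KL$ for which $\fA(\rho)$, assembled from the unbounded pieces $\bL^\dag\bL\rho$, $\rho\bL^\dag\bL$ and $\bL\rho\bL^\dag$, again lies in $\KL$; Lemma~\ref{lem:NormL} guarantees that the jump term $\bL\rho\bL^\dag$ is at least trace-class, which is what makes $\fA$ well defined. Density of the domain is immediate: the map $\rho\mapsto\bS^{-1}\rho\bS^{-1}$ is an isometry of $\Ktr$ onto $\KL$ and $\Kf$, which sits inside the domain, is dense in $\Ktr$. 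The two genuine tasks are then (i) accretivity of $\fA$ in the $L$-norm, i.e.\ $\Lnorm{(\lambda\,\mathrm{Id}+\fA)\rho}\geq\lambda\,\Lnorm{\rho}$ for all $\lambda>0$ and $\rho$ in the domain, and (ii) the range condition, i.e.\ surjectivity of $\lambda\,\mathrm{Id}+\fA$ onto $\KL$ for some (hence all) $\lambda>0$.

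For the accretivity I would transport everything through the isometry by setting $\sigma=\bS\rho\bS$, so that $\Lnorm{\rho}=\|\sigma\|_{tr}$ and, since $\bS$ commutes with $\bL^\dag\bL$, the generator becomes $\bS\fA(\rho)\bS=\tfrac12(\bL^\dag\bL\,\sigma+\sigma\,\bL^\dag\bL)-\widetilde{\bL}\,\sigma\,\widetilde{\bL}^{\,\dag}$ with $\widetilde{\bL}=\bS\bL\bS^{-1}$. Letting $F$ be the signature (difference of the spectral projectors onto the positive and negative parts) of the self-adjoint trace-class operator $\sigma$, so that $\tr{F\sigma}=\|\sigma\|_{tr}$ and $\|F\|\leq 1$, the estimate $\|\lambda\sigma+\bS\fA(\rho)\bS\|_{tr}\geq\lambda\|\sigma\|_{tr}+\tr{F\,\bS\fA(\rho)\bS}$ reduces everything to proving $\tr{F\,\bS\fA(\rho)\bS}\geq 0$. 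For the undressed Lindblad generator this nonnegativity is the classical computation: splitting $\sigma=\sigma^+-\sigma^-$, the off-diagonal (in the $\pm$ decomposition) contributions of the damping part vanish, while the jump part contributes terms of the form $\tr{\widetilde{\bL}^{\,\dag}P_\mp\widetilde{\bL}\,\sigma^\pm}\geq 0$. The novelty is the mismatch introduced by the dressing, namely $\widetilde{\bL}^{\,\dag}\widetilde{\bL}\neq\bL^\dag\bL$, which spoils the exact trace cancellation and leaves a residual $\tr{(\bL^\dag\bL-\widetilde{\bL}^{\,\dag}\widetilde{\bL})\sigma}$; recovering the correct sign of this residual is exactly where the positivity of the commutator $[\bL,\bL^\dag]=\bM\geq 0$ from \eqref{eq:M} is meant to intervene.

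For the range condition I would solve $\lambda\rho+\tfrac12(\bL^\dag\bL\rho+\rho\bL^\dag\bL)-\bL\rho\bL^\dag=\eta$ in two stages. The ``damping'' part $\rho\mapsto\lambda\rho+\tfrac12(\bL^\dag\bL\rho+\rho\bL^\dag\bL)$ is readily inverted in the eigenbasis $\{\ket{g_\mu}\}$ of $\bL^\dag\bL$ supplied by Lemma~\ref{lem:MaxMonotone}: its matrix elements are merely divided by $\lambda+\tfrac12(d_\mu+d_\nu)>0$, giving a bounded, positivity-preserving inverse. The jump term $\bL\rho\bL^\dag$ must then be absorbed; I would handle it via the resolvent identity together with the compactness of $(\bI+\lambda\bL^\dag\bL)^{-1}$ (again Lemma~\ref{lem:MaxMonotone}) and the boundedness furnished by Lemma~\ref{lem:NormL}, possibly through a finite-photon truncation whose solutions are controlled uniformly by the a priori $L$-norm bound and then passed to the limit. \textbf{I expect this range/surjectivity step to be the main obstacle}: it is precisely where unboundedness of $\bL$ could cause loss of mass to infinity, and closing it rigorously requires both the compact-resolvent structure and the weight $\bS=\sqrt{\bI+\bL^\dag\bL}$, which penalises high photon numbers and thereby prevents escape.

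Once Hille--Yosida delivers the contraction semigroup $T_t$, the remaining assertions follow quickly. Existence, uniqueness, the $C^1$ regularity and invariance of the domain are the standard output of the theorem for $\rho_0$ in the domain. Trace conservation comes from $\tr{\fA(\rho)}=0$ (the damping and jump contributions to the trace cancel), so $\tfrac{d}{dt}\tr{\rho(t)}=0$, the trace being $L$-norm continuous because $\bS\geq\bI$ gives $\|\cdot\|_{tr}\leq\Lnorm{\cdot}$. The bound $\Lnorm{\rho(t)}\leq\Lnorm{\rho_0}$ is mere contractivity of $T_t$; and since the generator commutes with its semigroup, $\fA(\rho(t))=T_t\,\fA(\rho_0)$, whence $\Lnorm{\fA(\rho(t))}\leq\Lnorm{\fA(\rho_0)}$ by the same contractivity. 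Finally, for non-negativity I would observe that the same signature computation in the plain trace norm yields $\tfrac{d}{dt}\|\rho(t)\|_{tr}\leq 0$; combined with trace conservation and the identity $\tr{\rho^-}=\tfrac12(\|\rho\|_{tr}-\tr{\rho})$, this forces $\tr{\rho^-(t)}\leq\tr{\rho^-(0)}$, so a non-negative initial condition stays non-negative (equivalently, the positivity-preserving resolvent together with the exponential formula gives the same conclusion).
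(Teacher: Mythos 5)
Your frame---Hille--Yosida on $\KL$ with a dense domain containing $\Kf$ and a contractive resolvent---is exactly the paper's frame, and your accretivity sketch can indeed be closed: with $\sigma=\bS\rho\bS$ and $\widetilde{\bL}=\bS\bL\bS^{-1}$ one has
$$
\widetilde{\bL}^\dag\widetilde{\bL}=\bS^{-1}\bL^\dag(\bI+\bL^\dag\bL)\bL\,\bS^{-1}\;\leq\;\bS^{-1}(\bI+\bL^\dag\bL)\bL^\dag\bL\,\bS^{-1}=\bL^\dag\bL ,
$$
precisely because $[\bL,\bL^\dag]=\bM\geq 0$ gives $\bL^\dag\bL\leq\bL\bL^\dag$; so your residual $\tr{(\bL^\dag\bL-\widetilde{\bL}^\dag\widetilde{\bL})|\sigma|}$ is nonnegative (modulo the cyclic-trace manipulations with unbounded operators, which the paper justifies by computing on $\Kf$ and passing to limits). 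This is the same algebraic inequality the paper exploits, namely $\bL^\dag\bS^2\bL\leq(\bI+\bL^\dag\bL)\bL^\dag\bL$ in the proof of Lemma~\ref{lem:Ar}.

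The genuine gap is the range condition, which you explicitly leave open and which is where essentially all of the paper's work lies (Lemmas~\ref{lem:Ar}--\ref{lem:accretive}). Accretivity alone does not produce a semigroup: you must prove that $\bI+\lambda\fA$ maps $D(\fA)$ \emph{onto} $\KL$, and ``resolvent identity plus compact resolvent plus a truncation passed to the limit'' is not yet a mechanism---nothing in it controls the unbounded gain term $\bL\rho\bL^\dag$ relative to the damping part, nor shows that limits of truncated solutions solve the exact resolvent equation and remain in the domain. The paper's mechanism is: introduce a homotopy parameter $r\in[0,1)$ and solve \eqref{eq:Ar}, i.e.\ the resolvent equation with the gain term weighted by $r$, by rewriting it as the fixed-point problem $\xi=f+r\lambda\bL\Pi(\xi)\bL^\dag$, where $\Pi$ is the Sylvester inverse of the damping part (your eigenbasis inversion); the crucial estimate \eqref{eq:trLPIL}, $\Lnorm{\bL\Pi(\xi)\bL^\dag}\leq\tfrac{1}{\lambda}\Lnorm{\xi}$---proved again from $\bL^\dag\bL\leq\bL\bL^\dag$---makes this map a strict contraction for $r<1$, hence solvable by Neumann series; then the limit $r\to 1^-$ is taken by monotone convergence for $f\geq 0$ (the $\rho_r$ are increasing in $r$ and uniformly bounded, hence Cauchy in $\KL$, Lemma~\ref{lem:ConvRhor}), by positive/negative decomposition of $\bS f\bS$ for general $f$, and uniqueness needs a separate argument (Lemma~\ref{lem:accretive}): any solution of $\rho+\lambda\fA(\rho)=0$ coincides with the unique $r<1$ solution for the datum $(1-r)\lambda\bL\rho\bL^\dag$, giving $\tr{|\rho|}\leq(1-r)\lambda\tr{|\bL\rho\bL^\dag|}$ for every $r$, hence $\rho=0$. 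Note, finally, that the commutator positivity you reserve for the accretivity step is in fact what the paper consumes \emph{inside} the range-condition proof; without an argument of this type, your proposal does not establish well-posedness.
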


The proof is based on the Hille-Yosida theorem, recalled in the appendix, {\as and which indeed ensures continuous derivability in $\KL$. To apply Hille-Yosida we must prove} that the unbounded super-operator $\fA$ on $\KL$ is $m$-accretive. Its domain $D(\fA)$ is dense since it contains $\Kf$. The essential difficulty is to prove that for any $\lambda >0$, $\bI+\lambda \fA$ is a bijection from $D(\fA)$ into $\KL$, with $(\bI + \lambda \fA)^{-1}$ a bounded linear operator on $\KL$ with norm less or equal to $1$.
The proof is partially  inspired by~\cite{Davie1977RoMP}: it is decomposed into the successive lemmas~\ref{lem:Ar} to~\ref{lem:accretive}. One of the key and original arguments, {\as used e.g.~in \eqref{eq:trLPIL}}, is that the commutator $[\bL,\bL^\dag]=\bM$ defines a non-negative Hermitian (unbounded) operator, see \eqref{eq:M}.

%%%IN: Lemma 3
\begin{lem} \label{lem:Ar} For any $f\in\KL$ (resp. $ f\in \Ktr$), any $\lambda >0$ and $r\in[0,1)$, there exists a unique $\rho_{r} \in\KL$ (resp. $\rho_{r} \in\Ktr$)  solution of
\begin{equation}\label{eq:Ar}
\tfrac{\bI+\lambda \bL^\dag\bL}{2} \rho_r + \rho_r \tfrac{\bI+\lambda \bL^\dag\bL}{2} = f + r \lambda \bL \rho_r \bL^\dag
\, .
\end{equation}
Moreover we have $\Lnorm{\rho_{r}}\leq \Lnorm{f}$ (resp. $\|\rho_r\|_{tr} \leq \|\rho_r\|_{tr}$).
When  additionally $f\geq 0$, we have  $0 \leq \rho_{r_1} \leq \rho_{r_2}$ for $0\leq r_1\leq r_2 <1$ ;
\end{lem}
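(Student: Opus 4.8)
The plan is to read \eqref{eq:Ar} as a Lyapunov (Sylvester) equation perturbed by the coupling term $r\lambda\,\bL\rho_r\bL^\dag$. Set $A=\tfrac{\bI+\lambda\bL^\dag\bL}{2}$, which is self-adjoint with $A\geq\tfrac12\bI$ by Lemma~\ref{lem:MaxMonotone}, and let $\mathcal{T}(g)=\int_0^{\infty}e^{-tA}\,g\,e^{-tA}\,dt$ be the inverse of the map $\sigma\mapsto A\sigma+\sigma A$. Since $\|e^{-tA}\|\leq e^{-t/2}$, $\mathcal{T}$ is a contraction on $(\Ktr,\|\cdot\|_{tr})$, and its integrand being nonnegative for $g\geq0$ shows that $\mathcal{T}$ preserves positivity. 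With $\mathcal{M}(\rho)=\bL\rho\bL^\dag$, which is positivity-preserving and well-defined from $\KL$ into $\Ktr$ by Lemma~\ref{lem:NormL}, equation~\eqref{eq:Ar} becomes the fixed point $\rho_r=\mathcal{T}(f)+r\lambda\,(\mathcal{T}\circ\mathcal{M})(\rho_r)$. As everything is linear in $f$ and $f=f^+-f^-$, it suffices to treat $f\geq0$ and recombine at the end.

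For $f\geq0$ I would build $\rho_r$ by the monotone iteration $\rho^{(0)}=0$, $\rho^{(j+1)}=\mathcal{T}(f)+r\lambda\,(\mathcal{T}\circ\mathcal{M})(\rho^{(j)})$. Positivity-preservation of $\mathcal{T}\circ\mathcal{M}$ and induction on $\rho^{(j+1)}-\rho^{(j)}=r\lambda\,(\mathcal{T}\circ\mathcal{M})(\rho^{(j)}-\rho^{(j-1)})$ show the iterates are nonnegative and nondecreasing. Taking the trace of $A\rho^{(j+1)}+\rho^{(j+1)}A=f+r\lambda\,\bL\rho^{(j)}\bL^\dag$ and using $\tr{\bL\rho\bL^\dag}=\tr{\bL^\dag\bL\rho}$ gives $\tr{\rho^{(j+1)}}+\lambda\,\tr{\bL^\dag\bL\rho^{(j+1)}}=\tr{f}+r\lambda\,\tr{\bL^\dag\bL\rho^{(j)}}$; since the iterates increase we have $\tr{\bL^\dag\bL\rho^{(j+1)}}\geq\tr{\bL^\dag\bL\rho^{(j)}}$, and because $r<1$ this forces $\tr{\rho^{(j+1)}}\leq\tr{f}$ uniformly in $j$. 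A nondecreasing sequence of nonnegative operators bounded in trace norm converges in $\Ktr$; the limit $\rho_r\geq0$ solves \eqref{eq:Ar} with $\|\rho_r\|_{tr}=\tr{\rho_r}\leq\tr{f}=\|f\|_{tr}$, which already yields the $\Ktr$ version. The same computation exhibits the Neumann series $\rho_r=\sum_{j\geq0}r^j(\lambda\,\mathcal{T}\circ\mathcal{M})^j\mathcal{T}(f)$ of nonnegative terms, from which $0\leq\rho_{r_1}\leq\rho_{r_2}$ for $0\leq r_1\leq r_2<1$ follows by comparing the series term by term. Uniqueness follows because a difference $\delta$ of two solutions satisfies $\delta=(r\lambda\,\mathcal{T}\circ\mathcal{M})^{N}\delta$ for all $N$, and the trace identity above, applied to $\delta^+$ and $\delta^-$, shows $\|(r\lambda\,\mathcal{T}\circ\mathcal{M})^{N}g\|_{tr}$ decays geometrically in $N$ for any nonnegative $g\in\KL$, so $\delta=0$.

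The remaining and most characteristic point is the bound $\Lnorm{\rho_r}\leq\Lnorm{f}$. To get it I would multiply \eqref{eq:Ar} by $\bS$ on both sides and take the trace. Because $A$ and $\bS$ are functions of $\bL^\dag\bL$ they commute, so $\tr{\bS(A\rho_r+\rho_r A)\bS}=\tr{(\bI+\lambda\bL^\dag\bL)(\bI+\bL^\dag\bL)\rho_r}$, while the right-hand side contributes $\Lnorm{f}+r\lambda\,\tr{\bL^\dag(\bI+\bL^\dag\bL)\bL\rho_r}$. Substituting $\bL^\dag\bL^\dag\bL\bL=(\bL^\dag\bL)^2-\bL^\dag\bM\bL$, with $\bM=[\bL,\bL^\dag]$ from \eqref{eq:M}, and collecting terms yields
$$\Lnorm{f}=\Lnorm{\rho_r}+(1-r)\lambda\,\tr{\bL^\dag\bL\rho_r}+(1-r)\lambda\,\tr{(\bL^\dag\bL)^2\rho_r}+r\lambda\,\tr{\bL^\dag\bM\bL\rho_r}\,.$$
Here $\rho_r\geq0$, and crucially $\bM\geq0$ so that $\bL^\dag\bM\bL\geq0$; hence all three added terms are nonnegative and $\Lnorm{\rho_r}\leq\Lnorm{f}$. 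This is precisely the step where the positivity of the commutator $[\bL,\bL^\dag]$ is essential. The general-$f$ bound then follows from $f=f^+-f^-$ and the triangle inequality for $\Lnorm{\cdot}$.

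The hard part will not be the algebra but its rigorous justification in the presence of the unbounded operator $\bL$. The $\Lnorm$ identity formally involves $\tr{(\bL^\dag\bL)^2\rho_r}$, which is not controlled a priori by $f\in\KL$. I would handle this by running the whole argument first on iterates driven by finite-rank $f$ (or by inserting spectral cut-offs of $\bL^\dag\bL$), where every trace is finite and cyclicity is legitimate, proving the bound for each approximation; then I would pass to the monotone limit using completeness of $\KL$ (Lemma~\ref{lem:NormL}), the already-established convergence, and the fact that every correction term carries a fixed sign, so that no cancellation can spoil the inequality. Throughout, one must also check that the operators produced by $\mathcal{T}$ and $\mathcal{M}$ remain in domains where \eqref{eq:rhoB} and the cyclicity of the trace are valid; this bookkeeping, together with keeping each correction term manifestly nonnegative via \eqref{eq:M}, is where the real care lies.
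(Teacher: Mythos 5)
Your construction is, at its core, the same as the paper's: your $\mathcal{T}$ is the paper's $\Pi$ of \eqref{eq:SylvSol}, your Neumann series is the paper's $\fC_r$ in \eqref{eq:Cr} (re-bracketed: you iterate $\mathcal{T}\circ\mathcal{M}$ on $\rho$, the paper iterates $\fB=\lambda\bL\Pi(\cdot)\bL^\dag$ on $\xi$ and sets $\rho_r=\Pi(\xi_r)$), and your key structural input --- positivity of $\rho_r$ combined with $\bM=[\bL,\bL^\dag]\geq 0$ from \eqref{eq:M} --- is exactly the paper's. Your $\Ktr$ existence proof by monotone iteration with the trace identity is sound, and is a nice constructive variant of the ``$\Ktr$'' case the paper declines to detail; the monotonicity $0\leq\rho_{r_1}\leq\rho_{r_2}$ by termwise comparison of the series also matches the paper.

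The genuine gap is in the $\KL$ bound, and it is not the deferred ``bookkeeping'' of your last paragraph: it is the crux of the lemma. Your identity
\[
\Lnorm{f}=\Lnorm{\rho_r}+(1-r)\lambda\,\tr{\bL^\dag\bL\rho_r}+(1-r)\lambda\,\tr{(\bL^\dag\bL)^2\rho_r}+r\lambda\,\tr{\bL^\dag\bM\bL\rho_r}
\]
is algebraically correct, but its derivation (sandwiching \eqref{eq:Ar} by $\bS$, cyclicity, splitting into four traces) presupposes that the weighted traces $\tr{(\bL^\dag\bL)^2\rho_r}$ and $\tr{\bL^\dag\bM\bL\rho_r}$ are finite, i.e.\ that $\rho_r$ lies in a space strictly better than $\KL$ --- which is essentially what you are trying to prove. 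Nothing in the $\Ktr$ monotone iteration supplies this, and the proposed repair is circular: even for $f\in\Kf$, finiteness of these traces for the solution does not follow from the $\Ktr$ construction, and proving it amounts to resolvent-level estimates of exactly the kind you want to avoid. Spectral cutoffs do not rescue the argument either: with a projection $P_N$ of $\bL^\dag\bL$ inserted, the commutator inequality gives $\bL^\dag\bS^2 P_N\bL\leq(\bI+\bL^\dag\bL)\bL^\dag\bL$ with an \emph{uncut} right-hand side, so if the common weighted trace is $+\infty$ the subtraction that isolates $\Lnorm{\rho_r}$ is illegitimate and the truncated inequality says nothing. This is precisely why the paper organizes the proof so that infinities never appear: it proves the bound \eqref{eq:trLPIL}, $\Lnorm{\bL\Pi(\xi)\bL^\dag}\leq\tr{\bL^\dag\bL(\bI+\lambda\bL^\dag\bL)^{-1}\phi}\leq\tfrac1\lambda\Lnorm{\xi}$, where the unbounded weight is absorbed into the \emph{bounded} operator $\bL^\dag\bL(\bI+\lambda\bL^\dag\bL)^{-1}$ tested against the trace-class $\phi=\bS\xi\bS$, and then obtains $\tr{\bS\rho_r\bS}\leq\tr{\bS f\bS}$ by adding finite quantities as in \eqref{eq:TrB}, never by subtracting possibly infinite ones.

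Two smaller slips of the same nature. First, for sign-indefinite $f\in\KL$ you cannot ``treat $f\geq 0$ and recombine'' via the spectral decomposition $f=f^+-f^-$: the positive and negative parts of an element of $\KL$ need not belong to $\KL$, because taking positive parts does not commute with conjugation by the unbounded $\bS$. The paper instead decomposes $g=\bS f\bS=g^+-g^-$ in $\Ktr$ and works with $\bS^{-1}g^{\pm}\bS^{-1}$, which are in $\KL$ by construction. Second, your uniqueness argument does work in the form $\tr{\bL^\dag\bL g_{N+1}}\leq r\,\tr{\bL^\dag\bL g_N}$ (note that $r\lambda\,\mathcal{T}\circ\mathcal{M}$ itself is \emph{not} a trace-norm contraction for large $\lambda$; only the $\bL^\dag\bL$-weighted trace contracts), but it again requires $\tr{\bL^\dag\bL\,\delta^{\pm}}<+\infty$ for an arbitrary difference of solutions, with the same objection as above; in the paper uniqueness for $r\in[0,1)$ is immediate because $\xi\mapsto f+r\fB(\xi)$ is a strict contraction on $\KL$.
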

We will not detail below the proof  in the $\Ktr$ space. It relies on similar  but  simpler  arguments to the ones used  in the proof below for the $\KL$ space.
\begin{proof}
{\as The diagonalization, boundedness and positivity of $(I+\lambda \bL^\dag \bL)^{-1}$ as established in the proof of Lemma~\ref{lem:MaxMonotone}, allows it to generate} a strongly continuous semigroup $\{e^{- s ( \bI+\lambda \bL^\dag\bL)}\}_{s\geq 0}$ of contractions on $\cH$ {\as with $\left\| e^{- s ( \bI+\lambda \bL^\dag\bL)}\right\|\leq e^{-s}$ for all $s\geq 0$}. Hence for any $\xi\in\KL$, equivalently for any $\phi = \bS \xi \bS \in \Ktr$, there exists a unique solution $\rho\in\KL$ of the Sylvester equation
   $$
   \left(\tfrac{\bI+\lambda \bL^\dag\bL}{2}\right)\rho + \rho \left(\tfrac{\bI+\lambda \bL^\dag\bL}{2}\right)= \xi
   = \bS^{-1} \phi \bS^{-1}$$
and given by the usual formula{\as~\cite[section VII-2]{bhatiaBook97} }
\begin{eqnarray}\label{eq:SylvSol}
\rho & = & \Pi(\xi) \;=\int_{0}^{+\infty} e^{- s (\bI+\lambda \bL^\dag\bL)/2} \, \xi \, e^{- s (\bI+\lambda \bL^\dag\bL)/2}\, ds \\
\nonumber & = & \int_{0}^{+\infty} {\as e^{- s (\bI+\lambda \bL^\dag\bL)/2} \; \bS^{-1} \phi \bS^{-1} \; e^{- s (\bI+\lambda \bL^\dag\bL)/2}} \, ds
\end{eqnarray}
{\as where the  integral is  defined as  a  simple  Riemann integral of functions valued in the  Banach space of bounded operators on $\cH$.}
Note that the semigroup generator $e^{- s (\bI+\lambda \bL^\dag\bL)/2}$ commutes with $\bS^{-1}$ {\as and both operators are bounded}, so in fact $\Pi(\phi) = \Pi(\bS \xi \bS) = {\as \bS \Pi(\xi) \bS}$.
{\as Replacing the left hand side by $\xi$ and $\rho_r$ by $\Pi(\xi)$ in \eqref{eq:Ar} gives}
$$\xi=  f +  r \lambda \bL \Pi(\xi) \bL^\dag = f + r \fB(\xi) \quad \text{with } \fB(\xi)=  \lambda \bL \Pi(\xi) \bL^\dag \, .$$
We will conclude the existence and uniqueness proof by showing that $\xi \mapsto f + r\fB(\xi)$ is a strict contraction for the $\Lnorm{\cdot}$-norm on $\KL$. Each part of the proof first considers the case of positive operators, then (between $\rhd$ $\lhd$) adapts it to arbitrary ones. \vspace{2mm}

%L3, Step 1
\noindent $\bullet$ \emph{Step 1:} Contraction, monotonicity of $\Pi(\xi)$ and some trace estimates.

From \eqref{eq:SylvSol} it is obvious that $\Pi(\xi) \geq 0$ for any $\xi \geq 0$ in $\KL$, hence by linearity $\Pi(\xi_1)\leq \Pi(\xi_2)$ as soon as $\xi_1 \leq \xi_2$ in $\KL$, i.e.~$\Pi$ is monotone. {\as Consider
\begin{multline*}
\Lnorm{\bL\Pi(\xi)\bL^\dag} =
\tr{\bS \bL\Pi(\xi)\bL^\dag \bS} \\ = \int_{0}^{+\infty} \tr{\bS \bL e^{- s (\bI+\lambda \bL^\dag \bL )/2}  \bS^{-1} \phi \bS^{-1}
   e^{- s (\bI+\lambda \bL^\dag \bL)/2} \bL^\dag  \bS} \, ds
\end{multline*}
Since $\bS \bL e^{- s (\bI+\lambda \bL^\dag\bL)/2} \bS^{-1}$ and its Hermitian conjugate  are  bounded operators we have thanks to \eqref{eq:rhoB}:
\begin{multline*}
  \tr{\bS \bL e^{- s (\bI+\lambda \bL^\dag \bL )/2}  \bS^{-1} \phi \bS^{-1}
   e^{- s (\bI+\lambda \bL^\dag \bL)/2} \bL^\dag  \bS}
   \\= \tr{\bS^{-1}
   e^{- s (\bI+\lambda \bL^\dag \bL)/2} \bL^\dag  \bS^2\bL e^{- s (\bI+\lambda \bL^\dag \bL )/2}  \bS^{-1} \phi }
   .
\end{multline*}
Consider now  the non negative Hermitian operator $\bL^\dag  \bS^2\bL= \bL^\dag (\bI+\bL^\dag\bL)\bL$. Since $\bL^\dag \bL \leq \bL\bL^\dag$ we have
$$
\bL^\dag  \bS^2\bL \leq  \bL^\dag (\bI+\bL\bL^\dag)\bL= (\bI+ \bL^\dag \bL) \bL^\dag\bL=\bL^\dag\bL(\bI+ \bL^\dag \bL)
.
$$
Thus, for each $s\geq 0$,  we have
\begin{multline*}
  e^{- s (\bI+\lambda \bL^\dag \bL)/2} \bL^\dag  \bS^2\bL e^{- s (\bI+\lambda \bL^\dag \bL )/2}  \bS^{-1}
  \\ \leq
  \bS^{-1}  e^{- s (\bI+\lambda \bL^\dag \bL)/2} (\bI + \bL^\dag \bL)\bL^\dag\bL   e^{- s (\bI+\lambda \bL^\dag \bL )/2}  \bS^{-1}
   \\
   = \bL^\dag \bL e^{- s (\bI+\lambda \bL^\dag \bL)}=e^{- s (\bI+\lambda \bL^\dag \bL)}\bL^\dag \bL
   .
\end{multline*}
Since $\phi$ is trace class and nonnegative, its spectral decomposition  reads $\phi= \sum_{\mu} \sigma_\mu \ket{\phi_\mu}\bra{\phi_\mu}$ where $(\ket{\phi_\mu})_{\mu\geq 1}$ forms a Hilbert basis, $\sigma_\mu\geq 0$ for each $\mu\geq 1$ and $\sum_\mu \sigma_\mu = \tr{\phi} < +\infty$. Thus, since all terms are nonnegative in the  series below, we have
\begin{multline*}
\Lnorm{\bL\Pi(\xi)\bL^\dag}
= \int_{0}^{+\infty}\left(\sum_{\mu\geq 1}  \sigma_\mu  \bra{\phi_\mu}\bS^{-1}
   e^{- s (\bI+\lambda \bL^\dag \bL)/2} \bL^\dag  \bS^2\bL e^{- s (\bI+\lambda \bL^\dag \bL )/2}  \bS^{-1}  \ket{\phi_\mu}\right) ds
   \\
\leq \sum_{\mu\geq 1}  \sigma_\mu \left\langle \phi_\mu \left| \int_0^{+\infty}\bL^\dag \bL e^{- s (\bI+\lambda \bL^\dag \bL)}~ds \right| \phi_\mu \right\rangle
.
\end{multline*}
Since $\int_0^{+\infty}\bL^\dag \bL e^{- s (\bI+\lambda \bL^\dag \bL)}~ds= \bL^\dag\bL (\bI + \lambda \bL^\dag \bL)^{-1}$ (use the spectral decomposition of $\bL^\dag\bL$), we have
\begin{multline}\label{eq:trLPIL}
 \Lnorm{\bL\Pi(\xi)\bL^\dag}
\leq \sum_{\mu\geq 1}  \sigma_\mu \left\langle \phi_\mu \left| \bL^\dag\bL (\bI + \lambda \bL^\dag \bL)^{-1} \right| \phi_\mu \right\rangle \\
= \tr{\bL^\dag\bL (\bI + \lambda \bL^\dag \bL)^{-1} \phi} \leq \tfrac{1}{\lambda} \tr{\phi}
= \tfrac{1}{\lambda} \Lnorm{\xi}
.
\end{multline}
}
We can repeat a similar argument to get
\begin{eqnarray}
\Lnorm{\Pi(\xi)} & = & \tr{\Pi(\phi)} =  {\as \tr{(\bI+\lambda \bL^\dag\bL)^{-1} \phi }}
\label{eq:trPI} \leq \tr{\phi} = \Lnorm{\xi} \, .
\end{eqnarray}

\begin{minipage}{140mm}
$\rhd$
For $\xi \not\geq 0$ in $\KL$, writing $\phi=\phi^+ - \phi^-$ we have $\phi^+,\phi^-\in\Ktr$ nonnegative and such that $|\phi| = |\bS \xi \bS| =\phi^+ + \phi^-$.
Then we get, using the above, linearity and monotonicity of $\Pi$, and the triangular inequality for the trace-norm:
\begin{eqnarray*}
  \Lnorm{\Pi(\xi)} & = & \tr{ |\Pi(\phi^{+}) - \Pi(\phi^-)| } \leq \tr{\Pi(\phi^{+}) } + \tr{\Pi(\phi^-)}
  \\ & \leq & \tr{\phi^+} + \tr{\phi^-} = \Lnorm{\xi}. \hspace{65mm} \lhd
\end{eqnarray*}
\end{minipage}\vspace{2mm}
\newline Thus $\Pi$ is a (non-strict) contraction for the $\Lnorm{\cdot}$-norm on $\KL$.\vspace{2mm}

%L3, Step 2
\noindent $\bullet$ \emph{Step 2:} Contraction of $\fB(\xi)$ and consequences.

Let us prove that $\xi \mapsto \fB(\xi)=  \lambda \bL \Pi(\xi) \bL^\dag$ is a (non-strict) contraction for the $\Lnorm{\cdot}$-norm on $\KL$. For any $\xi\in\KL$ nonnegative, $\phi$ and $\fB(\xi)$ are nonnegative. We deduce from~\eqref{eq:trLPIL},\eqref{eq:trPI} that $\fB(\xi)$ belongs to $\KL$ with
\begin{equation}\label{eq:TrB}
\Lnorm{\fB(\xi)} \leq \lambda {\as \tr{(\bL^\dag\bL)(\bI+\lambda \bL^\dag\bL)^{-1} \phi }}
= \tr{\phi} - \tr{\Pi(\phi)} \; \leq \; \tr{\phi}=\Lnorm{\xi} \; .
\end{equation}

\begin{minipage}{140mm}
$\rhd$ For $\xi \in \KL$ arbitrary, decompose $\phi\not\geq 0$ into $\phi=\phi^+ - \phi^-  \in \Ktr$ and similarly to above we get:
\begin{eqnarray*}
\Lnorm{\fB(\xi)} &=& \tr{\left| \bS \fB(\bS^{-1} \phi^+ \bS^{-1}) \bS - \bS \fB(\bS^{-1} \phi^- \bS^{-1}) \bS \right|} \\
 & \leq & \tr{\left| \bS \fB(\bS^{-1} \phi^+ \bS^{-1}) \bS \right|} + \tr{\left| \bS \fB(\bS^{-1} \phi^- \bS^{-1}) \bS \right|} \\
% & = &  \tr{\bS \fB(\bS^{-1} \phi^+ \bS^{-1}) \bS} + \tr{\bS \fB(\bS^{-1} \phi^- \bS^{-1}) \bS} \\
 & \leq & \tr{\phi^+} + \tr{\phi^-} \; = \; \tr{|\phi|} \; = \; \Lnorm{\xi} \; ,
\end{eqnarray*}
since on the second line we can drop the absolute value.\hfill $\lhd$
\end{minipage}
\vspace{2mm}

This proves (non-strict) contraction of $\fB(\xi)$. Thus $\xi \mapsto f + r\fB(\xi)$ is a strict contraction on $\KL$ as soon as $r \in [0,1)$. Consequently, it admits a unique fixed point $\xi_r\in\KL$ given by the absolutely converging series
\begin{equation}\label{eq:Cr}
\xi_r=\fC_r(f)= \sum_{s=0}^{+\infty} r^{s}\fB^{s} (f).
\end{equation}
This justifies a posteriori that taking $\xi \in \KL$ for the Sylvester equation yields a valid result.
The solution $\rho_r$ is then given by $\rho_r= \Pi(\fC_r(f))$.\vspace{2mm}

%L3, Step 3
\noindent $\bullet$ \emph{Step 3:} There remains to prove the  inequality.

First take $f \geq 0$ and  $0 \leq r_1\leq r_2 <1$. Then by \eqref{eq:Cr} we have $0 \leq f \leq \fC_{r_1}(f) \leq \fC_{r_2}(f)$, in particular $\xi_r \geq 0$ for all $r \in [0,1)$. Since $\Pi$ is monotone, we have
$0 \leq \Pi(f) \leq \rho_{r_1} \leq \rho_{r_2}$ which proves the {\as second} inequality. Moreover, we have
\begin{eqnarray*}
\tr{\bS \rho_r \bS} &=& \tr{\bS \Pi(\xi_r) \bS} = \tr{\Pi(\phi_r)} \leq \tr{\phi_r} - \tr{\bS \fB(\xi_r) \bS} \\
& = & \tr{\bS f \bS} + (r-1) \tr{\bS \fB(\xi_r) \bS} \leq \tr{\bS f \bS} \; .
\end{eqnarray*}
At the end of the first line we have used \eqref{eq:TrB}; the next equality comes from the definition $\phi_r = \bS \xi_r \bS = \bS(f + r \fB(\xi_r))\bS$ where all terms have been proved to be trace-class, and the final inequality holds because $f\geq 0$ implies $\xi_r \geq 0$ and thus $\fB(\xi_r) \geq 0$. This would conclude the proof if we impose $f \geq 0$.\vspace{2mm}

\begin{minipage}[b]{140mm}
$\rhd$
To prove that $\tr{|\bS \rho_r \bS|} \leq \tr{|\bS f \bS|}$ for $f \not\geq 0$ in $\KL$, define $g= \bS f \bS \in \Ktr$ and decompose $g = g^+ - g^-$ with $g^+, g^- \geq 0$. Then define $\pos{f} = \bS^{-1} g^+ \bS^{-1}$ and the associated solution
$\pos{\rho_r}$ of \eqref{eq:Ar}, and similarly for $\neg{f}$ and $\neg{\rho_{r}}$. By construction, $\pos{f},\neg{f},\pos{\rho_r},\neg{\rho_r}$ are nonnegative and belong to $\KL$. Note however that nothing guarantees that e.g.~$\pos{\rho_{r}}=\rho_r^+$, such that although $\rho_r = \pos{\rho_{r}} - \neg{\rho_{r}}$ by linearity, possibly $|\rho_r| = \rho_r^+ + \rho_r^- \neq \pos{\rho_{r}} + \neg{\rho_{r}}$. The triangular inequality for the trace-norm nevertheless guarantees
\begin{eqnarray*}
\Lnorm{\rho_r} &\!\!=\!\!&\Lnorm{\pos{\rho_{r}} - \neg{\rho_{r}}} \leq \Lnorm{\pos{\rho_{r}}} + \Lnorm{\neg{\rho_{r}}} = \tr{\bS \pos{\rho_{r}} \bS} + \tr{\bS \neg{\rho_{r}} \bS}\\
& \leq & \tr{\bS \pos{f} \bS} + \tr{\bS \neg{f} \bS} = \tr{g^+ + g^-} = \tr{|g|} = \Lnorm{f} \, .
\end{eqnarray*}
The second inequality is obtained thanks to the property $\tr{\bS \rho_r \bS} \leq \tr{\bS f \bS}$ just proved for $f\geq 0$. The equalities use linearity and positivity. \hfill $\lhd$
\end{minipage}
\end{proof}
%%%OUT: Lemma 3

%%% Lemma 4 can be skipped if there is no lizard in the above things...

%%%IN: Lemma 5
\begin{lem} \label{lem:ConvRhor}
Take a nonnegative $f \in \KL$  and consider $\rho_r\in\KL$ given by Lemma~\ref{lem:Ar}. Then, for $r$ tending  to $1^-$,  $\rho_r$ converges {\as towards some $\rho$ in $\KL$. Moreover this $\rho$ is a solution of
\begin{equation}\label{eq:A}
\rho + \lambda \fA(\rho) = \tfrac{\bI+\lambda \bL^\dag\bL}{2} \rho + \rho \tfrac{\bI+\lambda \bL^\dag\bL}{2} - \lambda \bL \rho \bL^\dag = f
,
\end{equation}
$ \rho \geq 0$ and}
$
\Lnorm{\rho}=\tr{\bS  \rho \bS  } \leq  \tr{\bS  f \bS }=\Lnorm{f} \, .
$
\end{lem}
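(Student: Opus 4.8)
The plan is to exploit the monotonicity and uniform bound furnished by Lemma~\ref{lem:Ar} to obtain convergence through a Cauchy argument, and then to pass to the limit in~\eqref{eq:Ar} by routing the unbounded term $\lambda\bL\rho_r\bL^\dag$ through the bounded operator $\bBl=\bL\bS^{-1}$ of Lemma~\ref{lem:NormL}.

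First I would set $\sigma_r=\bS\rho_r\bS\in\Ktr$, so that $\Lnorm{\rho_r}=\|\sigma_r\|_{tr}=\tr{\sigma_r}$, the last equality because $\rho_r\geq 0$ forces $\sigma_r\geq 0$. By Lemma~\ref{lem:Ar}, for $0\le r_1\le r_2<1$ we have $0\le \rho_{r_1}\le\rho_{r_2}$, hence $0\le\sigma_{r_1}\le\sigma_{r_2}$ and $\sigma_{r_2}-\sigma_{r_1}\geq 0$; therefore $\|\sigma_{r_2}-\sigma_{r_1}\|_{tr}=\tr{\sigma_{r_2}-\sigma_{r_1}}=\tr{\sigma_{r_2}}-\tr{\sigma_{r_1}}$. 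The real map $r\mapsto\tr{\sigma_r}=\Lnorm{\rho_r}$ is nondecreasing and bounded above by $\Lnorm{f}$ (again Lemma~\ref{lem:Ar}), hence Cauchy as $r\to 1^-$; consequently $\{\sigma_r\}$ is Cauchy in the trace norm. Since $\Ktr$ is complete, $\sigma_r\to\sigma$ in trace norm for some $\sigma\geq 0$ (the positive cone being trace-norm closed), and setting $\rho=\bS^{-1}\sigma\bS^{-1}$ the isometry of Lemma~\ref{lem:NormL} yields $\rho_r\to\rho$ in $\KL$, with $\rho\geq 0$ and $\Lnorm{\rho}=\tr{\bS\rho\bS}=\tr{\sigma}=\lim_{r\to1^-}\tr{\sigma_r}\le\Lnorm{f}$. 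This already settles convergence, positivity and the norm inequality.

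It remains to identify $\rho$ as a solution of~\eqref{eq:A}, i.e.~of~\eqref{eq:Ar} at $r=1$, and this is where the main obstacle lies: a direct control of $\lambda\bL\rho_r\bL^\dag$ is inconclusive. Indeed, taking the trace of~\eqref{eq:Ar} and using cyclicity gives $(1-r)\,\tr{\lambda\bL\rho_r\bL^\dag}=\tr{f}-\tr{\rho_r}$, an indeterminate $0/0$ form as $r\to1^-$, so one cannot bound this unbounded term in $\KL$-norm uniformly and pass to the limit directly in $\KL$. The resolution is to write $\bL=\bBl\bS$, with $\bBl$ bounded of norm $\le 1$ by Lemma~\ref{lem:NormL}, so that $\lambda\bL\rho_r\bL^\dag=\lambda\bBl\sigma_r\bBl^\dag$. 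Since $\sigma_r\to\sigma$ in trace norm and $\bBl,\bBl^\dag$ are bounded, \eqref{eq:rhoB} gives $\lambda\bL\rho_r\bL^\dag=\lambda\bBl\sigma_r\bBl^\dag\to\lambda\bBl\sigma\bBl^\dag=\lambda\bL\rho\bL^\dag$ in trace norm. Hence the right-hand side of~\eqref{eq:Ar}, namely $\xi_r=f+r\lambda\bL\rho_r\bL^\dag$, converges in trace norm to $\xi:=f+\lambda\bL\rho\bL^\dag\in\Ktr$.

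Finally I would pass to the limit through the Sylvester solution map $\Pi$ of~\eqref{eq:SylvSol}. By~\eqref{eq:trPI} together with the positive/arbitrary decomposition argument in the proof of Lemma~\ref{lem:Ar}, $\Pi$ is a contraction of $\Ktr$ for the trace norm, so $\rho_r=\Pi(\xi_r)\to\Pi(\xi)$ in trace norm. On the other hand $\rho_r\to\rho$ in $\KL$, hence in trace norm as well (note $\Lnorm{\cdot}\ge\|\cdot\|_{tr}$ since $\bS\geq\bI$); by uniqueness of the trace-norm limit, $\rho=\Pi(\xi)$. By definition of $\Pi$ this means exactly $\tfrac{\bI+\lambda\bL^\dag\bL}{2}\rho+\rho\tfrac{\bI+\lambda\bL^\dag\bL}{2}=\xi=f+\lambda\bL\rho\bL^\dag$, which is~\eqref{eq:A}. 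The only auxiliary facts used beyond the previous lemmas are that conjugation by $\bS$ preserves the operator order and that the positive cone is trace-norm closed, both routine.
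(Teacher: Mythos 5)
Your proof is correct and, for the convergence part, it is the paper's argument almost verbatim: the monotonicity $0\le\rho_{r_1}\le\rho_{r_2}$ and the uniform bound $\Lnorm{\rho_r}\le\Lnorm{f}$ from Lemma~\ref{lem:Ar} make $r\mapsto\tr{\bS\rho_r\bS}$ increasing and bounded, hence the family is Cauchy in $\KL$ (the paper phrases this with an increasing sequence $r_k\to 1^-$, you with the net $r\to 1^-$), and the factorization $\bL\eta\bL^\dag=\bBl(\bS\eta\bS)\bBl^\dag$ with $\|\bBl\|\le 1$ from Lemma~\ref{lem:NormL} gives trace-norm convergence of $\lambda\bL\rho_r\bL^\dag$. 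Where you genuinely deviate is in identifying the limit equation. The paper simply ``takes the limit in $\Ktr$'' of \eqref{eq:Ar}, which tacitly requires knowing that the trace-norm limit of the anticommutator terms $\tfrac{\bI+\lambda\bL^\dag\bL}{2}\rho_r+\rho_r\tfrac{\bI+\lambda\bL^\dag\bL}{2}$ is the anticommutator of $\rho$ --- a closedness property it does not spell out. You instead route the conclusion through the Sylvester solution map $\Pi$ of \eqref{eq:SylvSol}: $\rho_r=\Pi(\xi_r)$ with $\xi_r=f+r\lambda\bL\rho_r\bL^\dag$, $\Pi$ is a trace-norm contraction (the $\Ktr$ variant of \eqref{eq:trPI}, which the paper declares ``similar but simpler'' and omits, so you do owe that small verification via the positive/negative decomposition), $\xi_r\to\xi=f+\lambda\bL\rho\bL^\dag$ in trace norm, and since $\Lnorm{\cdot}\ge\|\cdot\|_{tr}$ (because $\|\bS^{-1}\|\le1$ and \eqref{eq:rhoB}), uniqueness of trace-norm limits forces $\rho=\Pi(\xi)$, i.e.~\eqref{eq:A}. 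Your closing step is thus more explicit than the paper's wording: it buys a fully justified passage to the limit, at the modest cost of invoking the $\Ktr$-contraction property of $\Pi$ already implicit in Lemma~\ref{lem:Ar}.
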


\begin{proof}
Take an increasing sequence $\{r_k\}_{k\in\NN}$ in $[0,1)$ converging to $1$. From
$\Lnorm{\rho_{r_{k_1}}} \leq \Lnorm{\rho_{r_{k_2}}} \leq \Lnorm{f}$ for $k_1 \leq k_2$, as proved in Lemma~\ref{lem:Ar}, we deduce that with $s_k = \Lnorm{\rho_{r_k}} = \tr{\bS \rho_{r_k} \bS}$, the sequence $\{s_k \}_{k \in \NN}$ is positive, increasing and bounded by $\Lnorm{f}$. Thus it converges.
Moreover, Lemma~\ref{lem:Ar} proves that $\rho_{r_{k_2}}-\rho_{r_{k_1}} \geq 0$ for $k_1 \leq k_2$, such that
$$\Lnorm{\rho_{r_{k_2}}-\rho_{r_{k_1}}} = \tr{\bS (\rho_{r_{k_2}}-\rho_{r_{k_1}}) \bS} = s_{k_2} - s_{k_1}$$
and thus $\{\rho_{r_k}\}_{k\in\NN}$ is a Cauchy sequence in the Banach space $\KL$ equipped with the $\Lnorm{\cdot}$-norm. Thus, it converges to some limit $\rho\in\KL$. This $\rho$ is independent of the increasing sequence $r_k$ tending to $1$, and $\Lnorm{\rho}$ is the limit of the sequence $\{s_k \}_{k \in \NN}$, which is bounded by $\Lnorm{f}$.

For any $r\in[0,1)$, we have defined $\rho_r$ to be solution of
$$
 \tfrac{\bI+\lambda \bL^\dag\bL}{2} \rho_r + \rho_r \tfrac{\bI+\lambda \bL^\dag\bL}{2} - \lambda \bL \rho_r \bL^\dag  = f - (1-r) \lambda \bL \rho_r \bL^\dag
.
$$
But $\bL \rho_r \bL^\dag$ is trace-class (see Lemma \ref{lem:NormL}) and it converges to $ \bL \rho \bL^\dag$ in the trace-norm topology. Hence the left-hand side is also trace-class, and it converges to $\tfrac{\bI+\lambda \bL^\dag\bL}{2} \rho + \rho \tfrac{\bI+\lambda \bL^\dag\bL}{2} - \lambda \bL \rho \bL^\dag $ in the trace-norm topology. Thus by taking the limit in $\Ktr$ of the above equality, we get~\eqref{eq:A}.
\end{proof}
%%%OUT: Lemma 5

The final Lemma proves the bijection property of $\bI + \lambda \fA$, with $\| (\bI + \lambda \fA)^{-1} \| \leq 1$,  such that we can conclude with the Hille-Yosida (Theorem \ref{thm:HillYosida} recalled in appendix).
%%%IN: Lemma 6
\begin{lem} \label{lem:accretive}
Take $f\in\KL$ and $\lambda\geq 0$. Then there exists a unique $\rho$ in the domain of $\fA$ and  solution of
$\rho + \lambda \fA(\rho) =f$, with $\Lnorm{\rho} \leq \Lnorm{f}$. Moreover if $f$ is non negative, then $\rho$ is also non negative.
\end{lem}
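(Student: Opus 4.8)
The plan is to treat the trivial case $\lambda=0$ (where the equation reduces to $\rho=f$) separately and to focus on $\lambda>0$. For the latter, most of the statement is already within reach: Lemma~\ref{lem:ConvRhor} provides, for every \emph{nonnegative} $f\in\KL$, a nonnegative solution $\rho$ of \eqref{eq:A} with $\Lnorm{\rho}\le\Lnorm{f}$. So I would first extend existence and the norm bound to arbitrary $f$ by the positive/negative decomposition already used in Lemma~\ref{lem:Ar}, and then concentrate the real effort on uniqueness, which is the only genuinely new ingredient. Domain membership is then essentially free: rewriting \eqref{eq:A} as $\fA(\rho)=\tfrac1\lambda(f-\rho)$ shows $\fA(\rho)\in\KL$ whenever $\rho,f\in\KL$, so any solution lies in $D(\fA)$.

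For existence and the bound, given $f\in\KL$ I would set $g=\bS f\bS\in\Ktr$ and split $g=g^+-g^-$ into nonnegative trace-class parts. Defining $\pos{f}=\bS^{-1}g^+\bS^{-1}$ and $\neg{f}=\bS^{-1}g^-\bS^{-1}$, both nonnegative and in $\KL$, and applying Lemma~\ref{lem:ConvRhor} to each, I obtain nonnegative solutions $\pos{\rho},\neg{\rho}$ of \eqref{eq:A}. By linearity $\rho=\pos{\rho}-\neg{\rho}$ solves \eqref{eq:A} for $f$, and the triangle inequality gives $\Lnorm{\rho}\le\Lnorm{\pos{\rho}}+\Lnorm{\neg{\rho}}\le\Lnorm{\pos{f}}+\Lnorm{\neg{f}}=\tr{g^+}+\tr{g^-}=\Lnorm{f}$. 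When $f\ge0$ one has $g\ge0$, so $g^-=0$ and $\rho=\pos{\rho}\ge0$, which also yields the claimed nonnegativity.

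For uniqueness, let $\rho_1,\rho_2$ solve \eqref{eq:A} for the same $f$, set $\rho=\rho_1-\rho_2$, and pass to the variable $\xi=\tfrac{\bI+\lambda\bL^\dag\bL}{2}\rho+\rho\tfrac{\bI+\lambda\bL^\dag\bL}{2}$, so that $\rho=\Pi(\xi)$; subtracting the two equations turns \eqref{eq:A} into the homogeneous relation $\xi=\fB(\xi)$. Writing $\sigma=\bS\xi\bS=\sigma^+-\sigma^-$ and $|\xi|_{\bS}:=\bS^{-1}(\sigma^++\sigma^-)\bS^{-1}\ge0$, conjugation of $-|\sigma|\le\sigma\le|\sigma|$ by $\bS^{-1}$ gives $-|\xi|_{\bS}\le\xi\le|\xi|_{\bS}$. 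Since $\Pi$, and hence $\fB$, is monotone and positivity-preserving (Lemma~\ref{lem:Ar}, Step~1), applying $\fB$ and using $\xi=\fB(\xi)$ yields $-\fB(|\xi|_{\bS})\le\xi\le\fB(|\xi|_{\bS})$. Conjugating by $\bS$ produces $-C\le\sigma\le C$ with $C=\bS\fB(|\xi|_{\bS})\bS\ge0$, and evaluating on the eigenbasis of $\sigma$ shows $\|\sigma\|_{tr}\le\tr{C}$, i.e.\ $\Lnorm{\xi}\le\Lnorm{\fB(|\xi|_{\bS})}$. Now the sharp form of \eqref{eq:TrB}, namely $\Lnorm{\fB(\eta)}\le\Lnorm{\eta}-\Lnorm{\Pi(\eta)}$ for $\eta\ge0$, applied with $\eta=|\xi|_{\bS}$ and using $\Lnorm{|\xi|_{\bS}}=\tr{|\sigma|}=\Lnorm{\xi}$, gives $\Lnorm{\fB(|\xi|_{\bS})}\le\Lnorm{\xi}-\Lnorm{\Pi(|\xi|_{\bS})}$. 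Combining, $\Lnorm{\Pi(|\xi|_{\bS})}\le0$, so $\Pi(|\xi|_{\bS})=0$; as $\Pi$ is injective this forces $|\xi|_{\bS}=0$, whence $\xi=0$ and $\rho=\Pi(0)=0$, i.e.\ $\rho_1=\rho_2$.

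I expect uniqueness to be the main obstacle. The construction in Lemmas~\ref{lem:Ar}--\ref{lem:ConvRhor} produces a solution as a limit $r\to1^-$ of fixed points of the \emph{strict} contraction $\xi\mapsto f+r\fB(\xi)$, but at $r=1$ the map $\xi\mapsto f+\fB(\xi)$ is only a non-strict contraction, so the Banach fixed-point theorem no longer guarantees uniqueness and $\bI-\fB$ need not be injective on all of $\KL$. What rescues the argument is positivity: the sharp inequality $\Lnorm{\fB(\eta)}\le\Lnorm{\eta}-\Lnorm{\Pi(\eta)}$ holds only for $\eta\ge0$, which ultimately rests on $[\bL,\bL^\dag]\ge0$ via \eqref{eq:trLPIL}, so one must first sandwich the sign-indefinite difference between $\pm|\xi|_{\bS}$ and transport the estimate through the monotone, positivity-preserving maps $\Pi$ and $\fB$. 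I would also take care, in the passage to the variable $\xi$, to verify that any solution $\rho\in D(\fA)$ indeed yields $\xi=f+\lambda\bL\rho\bL^\dag\in\KL$ (so that $\rho=\Pi(\xi)$ is legitimate), using $\bL\rho\bL^\dag\in\Ktr$ from Lemma~\ref{lem:NormL} together with the definition of $D(\fA)$.
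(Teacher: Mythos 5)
Your existence, norm-bound and positivity arguments coincide with the paper's (same decomposition $\bS f\bS=g^+-g^-$ fed into Lemma~\ref{lem:ConvRhor}), and your observation that domain membership is automatic is correct. The divergence, and the problem, is in the uniqueness step. Your argument is staged in $\KL$: to write $\sigma=\bS\xi\bS=\sigma^+-\sigma^-$ in $\Ktr$ and to invoke \eqref{eq:trPI}--\eqref{eq:TrB}, you need $\xi=\lambda\bL(\rho_1-\rho_2)\bL^\dag\in\KL$. But Lemma~\ref{lem:NormL} only gives $\bL\rho\bL^\dag\in\Ktr$, and since $\KL\subsetneq\Ktr$ (the $\Lnorm{\cdot}$-norm is strictly stronger), $\Ktr$-membership does not imply $\KL$-membership. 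The repair you sketch at the end cannot close this gap: from $\rho\in D(\fA)$ you get $\rho+\lambda\fA(\rho)\in\KL$, i.e.\ $\xi-\lambda\bL\rho\bL^\dag\in\KL$, which constrains only the \emph{difference} and says nothing about $\xi$ or $\lambda\bL\rho\bL^\dag$ individually; there is no reason a solution of the homogeneous equation should satisfy $\bL\rho\bL^\dag\in\KL$, and no lemma in the paper provides it.

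The fix is to transplant your sandwich argument into $\Ktr$ with the trace norm, where it works verbatim: Lemma~\ref{lem:Ar} is stated there too ("resp.\ $\Ktr$"), so $\Pi$, $\fB$, their monotonicity, and the sharp identity $\tr{\fB(\eta)}=\tr{\eta}-\tr{\Pi(\eta)}$ for $\eta\geq 0$ are all available; one decomposes $\xi=\xi^+-\xi^-$ the ordinary way, gets $-\fB(|\xi|)\leq\xi\leq\fB(|\xi|)$, hence $\|\xi\|_{tr}\leq\tr{\fB(|\xi|)}\leq\|\xi\|_{tr}-\tr{\Pi(|\xi|)}$, forcing $\Pi(|\xi|)=0$ and then $\xi=0$ by positive-definiteness of $(\bI+\lambda\bL^\dag\bL)^{-1}$. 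This ambient-space issue is precisely why the paper conducts its own uniqueness proof in $\Ktr$: it takes a homogeneous solution $\rho$, notes $\tilde f=(1-r)\lambda\bL\rho\bL^\dag\in\Ktr$, identifies $\rho$ with the \emph{unique} $\Ktr$-solution $\rho_r$ of \eqref{eq:Ar} for that datum, and concludes from $\tr{|\rho|}\leq(1-r)\lambda\tr{|\bL\rho\bL^\dag|}$ by letting $r\to 1^-$. Once relocated to $\Ktr$, your positivity-sandwich mechanism is a genuinely different and valid alternative to that argument --- somewhat more self-contained, since it uses only the structural properties of $\Pi$ and $\fB$ rather than re-invoking the uniqueness clause of Lemma~\ref{lem:Ar} --- but as written in $\KL$ it does not go through.
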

\begin{proof}
The existence of $\rho$ and bound $\Lnorm{\rho} \leq \Lnorm{f}$ are just a variation of Lemma~\ref{lem:ConvRhor} where we drop the assumption $f>0$. The proof is not detailed here since it follows the same lines as the last paragraph of the proof of Lemma \ref{lem:Ar}, decomposing $\bS f \bS=g^+ -g^-$ with $g^+,g^-\geq 0$ in $\Ktr$ such that $|\bS f\bS |=g^+ + g^-$.
%%% here follows the old whole thing %%%
%Lemma \ref{lem:Ar}, decompose $\bS f \bS=g^+ -g^-$ with $g^+,g^-\geq 0$ in $\Ktr$ such that $|\bS f\bS |=g^+ + g^-$, set $\pos{f} = \bS^{-1} g^+\bS^{-1}$ and $\neg{f} = \bS^{-1} g^-\bS^{-1}$, and denote $\pos{\rho}$,$\neg{\rho}$ the associated nonnegative solutions given by Lemma~\ref{lem:ConvRhor}.
% $$
% \rho_{>0} + \lambda \fA(\rho_{>0}) =f_{>0}\text{ and } \rho_{<0} + \lambda \fA(\rho_{<0}) =f_{<0}
% ,
% $$
% with $\|\rho_{>0}\|_{\text{\tiny$(k,\alpha)$}} \leq \|f_{>0}\|_{\text{\tiny$(k,\alpha)$}} $ and $\|\rho_{<0}\|_{\text{\tiny$(k,\alpha)$}} \leq \|f_{<0}\|_{\text{\tiny$(k,\alpha)$}} $.
% By linearity $\rho=\rho_{>0}-\rho_{<0}$ belongs to $\KL$ and is solution of  $\rho + \lambda \fA(\rho) =f$.
%
% We have
% $$
% \| \rho\|_{\text{\tiny$(k,\alpha)$}} \leq \|\rho_{>0}\|_{\text{\tiny$(k,\alpha)$}}+ \|\rho_{<0}\|_{\text{\tiny$(k,\alpha)$}}\leq  \|f_{>0}\|_{\text{\tiny$(k,\alpha)$}}+ \|f_{<0}\|_{\text{\tiny$(k,\alpha)$}}= \tr{g^+} + \tr{g^-} = \|f\|_{\text{\tiny$(k,\alpha)$}}.
% $$

Let us thus prove the uniqueness of $\rho$. By linearity this amounts to proving that if $\rho$ in the domain of $\fA$ solves $\rho+ \lambda \fA(\rho) =0$ then $\rho=0$. Take any such $\rho$; by assumption $\bL\rho\bL^\dag \in\Ktr$. While for $r=1$ we must still prove uniqueness, for $r\in[0,1)$, according to Lemma~\ref{lem:Ar}, there exists a \emph{unique} solution $\rho_r \in \Ktr$ to equation~\eqref{eq:Ar} where $f$ is replaced by  $\tilde{f}=(1-r)\lambda \bL\rho\bL^\dag \in\Ktr$, {i.e.~satisfying
$$
\tfrac{\bI+\lambda \bL^\dag\bL}{2} \rho_r + \rho_r \tfrac{\bI+\lambda \bL^\dag\bL}{2} = \lambda \bL \rho \bL^\dag +
r \lambda \bL (\rho_r-\rho) \bL^\dag \; .
$$
Since $\rho \in \Ktr$ satisfies this equation, we must have $\rho_r=\rho$.}
Moreover we have
$$
\tr{|\rho|} = \tr{|\rho_r|} \leq \tr{|\tilde{f}|} = (1-r)\lambda \tr{|\bL\rho\bL^\dag|},
$$
with $\tr{|\bL\rho\bL^\dag|}$ {\as independent of $r$, as $\rho$ was selected before introducing the $r$-related problem, and finite by Lemma \ref{lem:NormL}. For any fixed $\rho \in \KL$ that solves $\rho+ \lambda \fA(\rho) =0$, denoting $x = \tr{|\rho|}$ and $y=\tr{|\bL\rho\bL^\dag|}$ which are both finite, we thus have the standalone inequality
$$ 0 \leq x \leq  (1-r)\lambda y \quad \text{for all } r \in [0,1) \, .$$
This implies  $x = 0$ and we conclude that $\rho=0$.}

{\as Lemma~\ref{lem:ConvRhor} ensures that, for any $\lambda \geq 0$, the solution $\rho$ of $(\bI + \lambda \fA)(\rho) = f$ is nonnegative when $f$ is nonnegative. Since  $\rho(t) = \lim_{n\mapsto +\infty}\left( \left(\bI + \frac{t}{n}\fA\right)^{-1}\right)^{n}(\rho_0)$ for all $t\geq 0$ (see e.g.~\cite{BrezisBook1987})), $\rho(t)$ is nonnegative as soon as $\rho_0$ is also nonnegative.}
\end{proof}
%%%OUT: Lemma 6

%%%%%%%%%%%%%%
%%%  CONV  %%%
%%%%%%%%%%%%%%

\section{Asymptotic Convergence} \label{sec:AC}

We now restrict our attention to the space $\cD$ of quantum states, i.e.~nonnegative operators $\rho$ of trace one.
This is justified as by construction, the quantum dynamics \eqref{eq:mod} preserves the trace and the positivity of $\rho_0$. We characterize convergence in two ways. First, we prove that every trajectory of \eqref{eq:mod} in $\KL$ converges to a unique equilibrium point $\bar\rho$ whose support is spanned by the coherent states $\{ \q{\alpha_m} \}_{m=1,2,...,k}$ with $\alpha_m = \alpha \, e^{2i\pi m/k} \in \mathbb{C}$. Second, we identify invariants of the dynamics, which allow to readily determine to which $\bar\rho$ any particular initial state $\rho_0$ would converge.

\subsection{Unique limit point}

The first step towards proving asymptotic convergence is to identify an efficient Lyapunov function.
From Theorem~\ref{thm:Cauchy} we know that $\Lnorm{\rho_t}$ is non-increasing.  We will now characterize convergence with the following Lyapunov function
$$V(\rho) = \Lnorm{\rho} - 1\, .$$

%%% Lemma: lyap

\begin{lem}\label{lem:Lyap}
For any quantum state $\rho\geq 0 $ in $\KL$ and belonging also to the domain of the super-operator $\fA$ of Theorem~\ref{thm:Cauchy},  we have
$$
V(\rho) = \tr{\bL \rho \bL^\dag} \geq 0, \quad -\tr{\bL \fA(\rho) \bL^\dag} \leq -k!\, V(\rho) \;\; \text{and}\quad
{\as \tr{\bS \fA(\rho) \bS} = \tr{\sqrt{\bM} \bL\rho \bL^\dag \sqrt{\bM} }}
$$
where the positive operator $\bM$ is given by~\eqref{eq:M}.
\end{lem}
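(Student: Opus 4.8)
My plan is to evaluate each of the three identities as a trace pairing against $\rho$. I would carry out the algebra first on the dense subspace $\Kf$ — equivalently, restricting the operator domains to $\cH^f$ — where $\bL,\bL^\dag,\bS,\bM$ act without domain obstruction and cyclic permutation under the trace is unproblematic, and only then conclude by density, using that $\rho$ lies in the domain of $\fA$ so that the operators below genuinely map $\rho$ into $\Ktr$. For the first identity I use $\bS^2=\bI+\bL^\dag\bL$ together with the fact that $\rho\geq 0$ forces $\bS\rho\bS\geq 0$, so the modulus is superfluous:
$$\Lnorm{\rho}=\tr{\bS\rho\bS}=\tr{(\bI+\bL^\dag\bL)\rho}=\tr{\rho}+\tr{\bL^\dag\bL\rho}=1+\tr{\bL\rho\bL^\dag},$$
where the last step uses $\tr{\rho}=1$ and cyclicity. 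Thus $V(\rho)=\Lnorm{\rho}-1=\tr{\bL\rho\bL^\dag}$, which is non-negative since $\bL\rho\bL^\dag\geq 0$ for $\rho\geq 0$.

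The remaining two identities both rest on a single computation. Expanding $\fA(\rho)=\tfrac12(\bL^\dag\bL\rho+\rho\bL^\dag\bL)-\bL\rho\bL^\dag$ and permuting cyclically,
$$\tr{\bL^\dag\bL\,\fA(\rho)}=\tr{(\bL^\dag\bL)^2\rho}-\tr{(\bL^\dag)^2\bL^2\rho}=\tr{\bL^\dag(\bL\bL^\dag-\bL^\dag\bL)\bL\rho}=\tr{\bL^\dag\bM\bL\rho},$$
the commutator $\bM=[\bL,\bL^\dag]$ of \eqref{eq:M} appearing exactly at the right spot. Since $\tr{\fA(\rho)}=\tr{\bL^\dag\bL\rho}-\tr{\bL\rho\bL^\dag}=0$, the third identity follows:
$$\tr{\bS\fA(\rho)\bS}=\tr{(\bI+\bL^\dag\bL)\fA(\rho)}=\tr{\bL^\dag\bM\bL\rho}=\tr{\sqrt{\bM}\,\bL\rho\bL^\dag\,\sqrt{\bM}},$$
the last equality being one more cyclic permutation. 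The very same computation gives $\tr{\bL\,\fA(\rho)\,\bL^\dag}=\tr{\bL^\dag\bL\,\fA(\rho)}=\tr{\bM\,\bL\rho\bL^\dag}$, which sets up the second identity.

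It then suffices to prove the operator inequality $\bM\geq k!\,\bI$: since $\bL\rho\bL^\dag\geq 0$, this yields $\tr{\bM\,\bL\rho\bL^\dag}\geq k!\,\tr{\bL\rho\bL^\dag}=k!\,V(\rho)$, i.e.\ $-\tr{\bL\fA(\rho)\bL^\dag}\leq -k!\,V(\rho)$. To get $\bM\geq k!\,\bI$ I observe that the $\alpha^k\bI$ terms cancel in the commutator, so $\bM=[\ba^k,(\ba^\dag)^k]$, and invoke the normal-ordering identity $\ba^k(\ba^\dag)^k=\sum_{j=0}^k\binom{k}{j}^2 j!\,(\ba^\dag)^{k-j}\ba^{k-j}$, whose $j=0$ term is precisely $(\ba^\dag)^k\ba^k$; subtracting leaves $\bM=\sum_{j=1}^k\binom{k}{j}^2 j!\,(\ba^\dag)^{k-j}\ba^{k-j}$, a sum of non-negative operators whose $j=k$ term is $k!\,\bI$. (Equivalently, \eqref{eq:M} shows $\bM$ is diagonal in the Fock basis with smallest eigenvalue $k!$, attained at $\ket{0}$.) I expect the genuine obstacle to be not these algebraic steps but their legitimacy: $\bL,\bL^\dag,\bM,\sqrt{\bM}$ are all unbounded, so each cyclic permutation and each trace must be justified by checking that the intervening operators remain trace-class. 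This is precisely where the restriction to $\Kf$ and the limiting argument, combined with $\bL\rho\bL^\dag\in\Ktr$ from Lemma~\ref{lem:NormL} and the finiteness of $\tr{\bS\fA(\rho)\bS}$ for $\rho$ in the domain of $\fA$, do the work.
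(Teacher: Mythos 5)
Your proof is correct and follows essentially the same route as the paper's: reduce to density operators in $\Kf$ by a density argument, compute with cyclic traces to expose the commutator $\bM=[\bL,\bL^\dag]$, and conclude with the bound $\bM\geq k!\,\bI$. The only cosmetic differences are that you derive the second and third identities from the single computation $\tr{\bL^\dag\bL\,\fA(\rho)}=\tr{\bL^\dag\bM\bL\rho}$ together with $\tr{\fA(\rho)}=0$, where the paper runs two parallel computations, and that you prove $\bM\geq k!\,\bI$ via the normal-ordering expansion, where the paper simply reads the estimate $\bM\geq k!(\bN+\bI)\geq k!\,\bI$ off the explicit diagonal form in \eqref{eq:M}.
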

\begin{proof}{\as  By density of the density  operators  belonging to  $\Kf$ into the set of density operators $\rho$  belonging to the domain of $\fA$ equipped with the norm $\Lnorm{\bullet}+ \Lnorm{\fA(\bullet)}$, it  is enough to prove these three  statements with density operators $\rho\in\Kf$. Thus   we can perform all the computations as in the finite dimensional case where   $\tr{A B}=\tr{BA}$ for any operators $A$ and $B$.}
For the first statement, noting that a quantum state $\rho$ is positive and of trace one, we get:
\begin{eqnarray*}
\tr{\bL \rho \bL^\dag}+1 = \tr{\bL \rho \bL^\dag + \rho}= \tr{(\bI+\bL^\dag\bL)\rho}= \tr{\sqrt{\bI+\bL^\dag\bL}~\rho~\sqrt{\bI+\bL^\dag\bL}}= \Lnorm{\rho}
.
\end{eqnarray*}
For the second statement,  simple computations yield
\begin{multline*}
-\tr{\bL \fA(\rho) \bL^\dag} = \tr{\bL \rho  \bL^\dag [\bL^\dag,\bL] }
= - \tr{\bL \rho  \bL^\dag \bM  }=- \tr{\sqrt{\bL \rho  \bL^\dag} \bM\sqrt{\bL \rho  \bL^\dag}  } \\
\leq  - k!\, \tr{\sqrt{\bL \rho  \bL^\dag}\bI \sqrt{\bL \rho  \bL^\dag}} = -k!\,V(\rho) \, .
\end{multline*}
Here we have used $\bM$ from~\eqref{eq:M} and the rough estimate $\bM \geq  (\bN+\bI) k! \geq k! \bI$.
{\as For the third statement we have
\begin{multline*}
\tr{\bS\fA(\rho) \bS} = \tfrac{1}{2}\tr{\sqrt{\bI +  \bL^\dag\bL}\big( \bL^\dag \bL \rho + \rho \bL^\dag \bL\big)\sqrt{\bI +  \bL^\dag\bL}}
- \tr{\sqrt{\bI +  \bL^\dag\bL}\bL \rho \bL^\dag \sqrt{\bI +  \bL^\dag\bL}}
\\
= \tr{(\bI +  \bL^\dag\bL) \bL^\dag \bL \rho }
- \tr{\bL^\dag (\bI +  \bL^\dag\bL)\bL \rho }= \tr{\bL^\dag (\bL\bL^\dag-\bL^\dag\bL)\bL\rho}
\end{multline*}
where $\bL\bL^\dag-\bL^\dag\bL=\bM$.}
\end{proof}

This means that $V(\rho)$ is a  Lyapunov function ensuring exponential converge towards a steady-state $\bar\rho$ depending on the initial condition in the following sense.

%%%  THEOREM: AC  %%%

\begin{thm} \label{thm:convergence}
Consider the unique trajectory  $[0,+\infty[\ni t\mapsto \rho(t)\in\KL$ solution of~\eqref{eq:mod} with initial condition $\rho(0)=\rho_0$ non-negative,  of trace one and  in the domain of $\fA$. Then there exists  $\bar\rho \in \KL$ nonnegative and of trace one, with support in $$\cH_{\alpha,k} = \text{span}\bigg\{\q{\alpha_m}: \alpha_m = \alpha \, e^{2i\pi m/k}\;, \;\; m=1,2,...,k\bigg\},$$ such that $\rho$ converges to $\bar\rho$ in $\KL$. Moreover, we have exponential convergence towards $\cH_{\alpha,k}$ in the sense:
$$ \tr{\big| \bL (\rho(t)-\bar\rho) \bL^\dag \big|} \leq \tr{\bL |\rho_0 -\bar\rho| \bL^\dag} \; e^{-k!\, t} \; .$$
\end{thm}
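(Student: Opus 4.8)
The plan is to use $V(\rho)=\Lnorm{\rho}-1=\tr{\bL\rho\bL^\dag}$ from Lemma~\ref{lem:Lyap} as a Lyapunov function: first obtain an exponential decay of $V$, then split $\rho$ along the orthogonal decomposition $\cH=\cH_{\alpha,k}\oplus\cH_{\alpha,k}^{\perp}$ and control each block. For the decay, since $\rho(t)\geq 0$ we have $V(\rho(t))=\tr{\bS\rho(t)\bS}-1$, and because $\rho\mapsto\tr{\bS\rho\bS}$ is a bounded linear functional on $\KL$ while $t\mapsto\rho(t)$ is $C^1$ in $\KL$ (Theorem~\ref{thm:Cauchy}), the scalar map $t\mapsto V(\rho(t))$ is $C^1$ with $\tfrac{d}{dt}V=\tr{\bS\dot\rho\bS}=-\tr{\bS\fA(\rho)\bS}$, using $\dot\rho=-\fA(\rho)$. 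The second and third statements of Lemma~\ref{lem:Lyap} give $\tr{\bS\fA(\rho)\bS}=\tr{\bL\fA(\rho)\bL^\dag}\geq k!\,V(\rho)$, hence $\tfrac{d}{dt}V\leq-k!\,V$ and by Gr\"onwall $V(\rho(t))\leq V(\rho_0)\,e^{-k!\,t}$; in particular $\tr{\bL\rho(t)\bL^\dag}\to 0$.

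The main obstacle is to upgrade $V\to 0$ into genuine convergence of $\rho$. I would first record that $\cH_{\alpha,k}=\ker\bL$: indeed $\bL\q{\alpha_m}=(\ba^k-\alpha^k)\q{\alpha_m}=0$, the coherent states $\q{\alpha_m}$ are linearly independent, and a Fock-basis recursion shows $\dim\ker\bL=k$. Let $P$ be the orthogonal projector onto $\ker\bL=\ker\bL^\dag\bL$ and $Q=\bI-P$; both commute with $\bL^\dag\bL$ and with $\bS$, and $\bS P=P$. The crucial ingredient is a spectral gap of $\bL^\dag\bL$ above $0$: since $(\bI+\lambda\bL^\dag\bL)^{-1}$ is compact (Lemma~\ref{lem:MaxMonotone}), $\bL^\dag\bL$ has purely discrete spectrum accumulating only at $+\infty$, so its zero eigenvalue (of multiplicity $k$) is isolated and there is $g_0>0$ with $\bL^\dag\bL\geq g_0\,Q$. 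Using $\bL P=P\bL^\dag=0$ one has $\bL\rho\bL^\dag=\bL\,Q\rho Q\,\bL^\dag$, whence
\[
g_0\,\tr{Q\rho Q}\leq\tr{Q\rho Q\,\bL^\dag\bL}=\tr{\bL\rho\bL^\dag}=V(\rho)\longrightarrow 0 .
\]
The off-diagonal blocks are then controlled by positivity of $\rho$: writing $\sigma=\bS\rho\bS\geq 0$ and using $\|P\sigma Q\|_{tr}\leq\sqrt{\tr{P\sigma P}\,\tr{Q\sigma Q}}$ together with $\Lnorm{Q\rho Q}=\tr{Q\rho Q}+V(\rho)$ and $\Lnorm{P\rho P}\leq\Lnorm{\rho_0}$, one gets that $\Lnorm{P\rho Q}$, $\Lnorm{Q\rho P}$ and $\Lnorm{Q\rho Q}$ all tend to $0$.

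It remains to prove convergence of the kernel block $P\rho P$. A direct computation using $P\bL^\dag=0$ and $\bL^\dag\bL P=0$ kills the anticommutator terms, leaving $\tfrac{d}{dt}(P\rho P)=P\bL\rho\bL^\dag P=(P\bL Q)\,(Q\rho Q)\,(Q\bL^\dag P)$. Because $P$ has finite rank and maps into $\ker\bL$, which is spanned by coherent states lying in the domain of $(\ba^\dag)^k$, the operator $P\bL$ is bounded (finite rank), say $\|P\bL\|=C_1<\infty$; moreover $\bS P=P$ makes $\Lnorm{\cdot}$ coincide with $\|\cdot\|_{tr}$ on the $P$-block. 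Hence $\Lnorm{\tfrac{d}{dt}P\rho P}\leq C_1^2\,\tr{Q\rho Q}\leq (C_1^2/g_0)\,V(\rho_0)\,e^{-k!\,t}$, which is integrable in $t$; therefore $P\rho(t)P$ is Cauchy in $\KL$ and converges to some $\bar\rho$. By construction $\bar\rho$ is supported in $\cH_{\alpha,k}$, is nonnegative, and has trace one since $\tr{P\rho P}=1-\tr{Q\rho Q}\to 1$. Collecting the four blocks yields $\rho(t)\to\bar\rho$ in $\KL$.

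For the announced rate, note that $\bar\rho=P\bar\rho P$ gives $\bL\bar\rho\bL^\dag=0$, so $\bL(\rho(t)-\bar\rho)\bL^\dag=\bL\rho(t)\bL^\dag\geq 0$ and $\tr{|\bL(\rho(t)-\bar\rho)\bL^\dag|}=V(\rho(t))\leq V(\rho_0)\,e^{-k!\,t}$. Finally $V(\rho_0)=\tr{\bL(\rho_0-\bar\rho)\bL^\dag}\leq\tr{\bL|\rho_0-\bar\rho|\bL^\dag}$, because $X\mapsto\bL X\bL^\dag$ is positivity-preserving and $|\rho_0-\bar\rho|\geq\rho_0-\bar\rho$, which gives the stated inequality.
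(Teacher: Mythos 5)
Your proposal is correct, and its core is genuinely different from the paper's. The ingredients you share with the paper are the Lyapunov function $V$ with its exponential decay (your Gr\"onwall derivation via the bounded linear functional $\rho\mapsto\tr{\bS\rho\bS}$ just makes explicit what the paper attributes to Lemma~\ref{lem:Lyap}) and the identification $\ker\bL=\cH_{\alpha,k}$ by the Fock-basis recursion. Where you diverge is in upgrading $V\to 0$ to convergence of $\rho(t)$ itself. The paper argues by compactness: from the Hille--Yosida estimate $\Lnorm{\fA(\rho(t))}\leq\Lnorm{\fA(\rho_0)}$ of Theorem~\ref{thm:Cauchy} and the third identity of Lemma~\ref{lem:Lyap}, it gets a uniform-in-time bound on $\tr{\sqrt{\bM}\bL\rho(t)\bL^\dag\sqrt{\bM}}$, invokes a compact injection of the corresponding stronger Banach space $\mathcal{K}_{\sqrt{M}L}(\mathcal{H})$ into $\KL$ to obtain precompactness of the trajectory, and then proves the adherent point is unique because $\Lnorm{\rho(t)-\bar\rho}$ is non-increasing (contraction property of the semigroup). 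You instead exploit the spectral gap $\bL^\dag\bL\geq g_0 Q$ --- legitimate, since Lemma~\ref{lem:MaxMonotone} gives a compact resolvent, hence discrete spectrum accumulating only at $+\infty$ (the paper itself makes this very remark right after the theorem) --- then decompose $\rho$ into blocks along $P,Q$, kill the $Q$-block and the off-diagonal blocks by positivity and the Cauchy--Schwarz inequality $\|P\sigma Q\|_{tr}\leq\sqrt{\tr{P\sigma P}\tr{Q\sigma Q}}$, and obtain convergence of the $P$-block because its derivative $(P\bL Q)(Q\rho Q)(Q\bL^\dag P)$ has integrable $\KL$-norm. This buys you a constructive identification of $\bar\rho$ as $\lim_{t\to\infty}P\rho(t)P$, explicit decay rates on every block, and independence from both the auxiliary space $\mathcal{K}_{\sqrt{M}L}(\mathcal{H})$ (whose compact injection the paper leaves as ``not difficult to prove'') and the monotonicity of $\Lnorm{\fA(\rho(t))}$; the paper's route is shorter and softer but non-quantitative about how $\rho(t)$ approaches $\bar\rho$. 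Your endgame is also cleaner: the paper obtains the final inequality by evolving the positive and negative parts $\zeta^\pm$ of $\rho_0-\bar\rho$ separately and summing, whereas you observe that $\bL\bar\rho\bL^\dag=0$ makes the left-hand side equal $V(\rho(t))$, and that $V(\rho_0)\leq\tr{\bL|\rho_0-\bar\rho|\bL^\dag}$ since $X\mapsto\bL X\bL^\dag$ preserves positivity.

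Two caveats, neither of which I count as a gap relative to the paper: your trace manipulations involving the unbounded $\bL$ (e.g.\ $\tr{Q\rho Q\,\bL^\dag\bL}=\tr{\bL\rho\bL^\dag}$) are at the same level of informality as the paper's own Lemma~\ref{lem:Lyap} and should, for full rigor, be justified by the same $\Kf$-density argument or via the spectral decomposition of $\rho\geq 0$; and both your proof and the paper's implicitly assume $\tr{\bL|\rho_0-\bar\rho|\bL^\dag}<+\infty$, which is anyway built into the meaning of the stated inequality.
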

\begin{proof}
{\as For any given $\rho_0$ is in the domain of $\fA$, from Theorem~\ref{thm:Cauchy},  we know that $\rho(t)$ is also in the domain of $\fA$ for all $t\geq 0$, is nonnegative and of trace one}. According to Lemma \ref{lem:Lyap} the function $f(t) = V(\rho(t))$ converges exponentially to zero as $t$ increases.

{\as From Lemma~\ref{lem:Lyap} we have
$$
0\leq \tr{\sqrt{\bM} \bL \rho(t) \bL^\dag \sqrt{\bM}} =  \tr{\bS \fA(\rho(t))\bS} \leq \tr{|\bS \fA(\rho(t))\bS|}=\Lnorm{\fA(\rho(t))} \leq \Lnorm{\fA(\rho_0)}
$$
since according to Theorem~\ref{thm:Cauchy} the norm in $\KL$ of $\fA(\rho(t))$ is time-decreasing. This implies  that $\tr{\sqrt{\bM}\bL \rho(t) \bL^\dag \sqrt{\bM}} $ is bounded in time. Denote by $\mathcal{K}_{\sqrt{M}L}(\mathcal{H})$ the  Banach space of  operators $\xi $  equipped with the   trace norm $\tr{\left|\sqrt{\bI + \bL^\dag \bM \bL}~\xi ~ \sqrt{\bI +\bL^\dag \bM \bL}\right|}$. Since $\bM\geq k! (\bN+\bI)$ is unbounded, it is not difficult to prove that the injection of $\mathcal{K}_{\sqrt{M}L}(\mathcal{H})$ in $\KL$ is compact. But  $\rho_0$ belongs to the domain of $\fA$. Thus  the above inequality implies, since $\rho(t)\geq 0$,
 that $\rho(t)$ belongs to  $\mathcal{K}_{\sqrt{M}L}(\mathcal{H})$  and is bounded uniformly versus $t$.}
Thus the trajectory $\{\rho(t)~|~t\geq 0\}$ is precompact in $\KL$, which means that it must have an adherent point $\bar\rho\in \KL$ for $t$ tending towards infinity.

Lemma~\ref{lem:Lyap} implies $\tr{\bL \bar\rho \bL^\dag}=0$, i.e.~$\bar\rho$ is a steady state and its support is contained in the kernel of $\bL$. The latter is spanned by vectors satisfying $\ba^k \q{\psi} = \alpha^k \q{\psi}$ and writing $\q{\psi} = \sum_{m \in \mathbb{N}} \, \psi_m \q{m}$ we get the recurrence relation
$$ \psi_{m+k} = \frac{\alpha^k}{\sqrt{(m+k)...(m+2)(m+1)}} \; \psi_m \quad \text{for all } m\geq 0.$$
This leaves $k$ degrees of freedom to initialize the recurrence(s), which besides that is satisfied by the $k$ states mentioned in the statement as span of $\cH_{\alpha,k}$. Thus the kernel of $\bL$ indeed coincides with $\cH_{\alpha,k}$.

For the first part, there remains to show that the adherence point is unique. {\as  Since $\rho_t-\bar\rho$ is the solution of~\eqref{eq:mod} with initial condition $\rho_0-\bar\rho$ in the domain of $\fA$, by Theorem~\ref{thm:Cauchy} its norm $\Lnorm{\rho(t)-\bar\rho}$ is non-increasing and thus $\rho(t)$ converges towards $\bar\rho$ in $\KL$  that is unique.}

For exponential convergence,   decompose $\rho(0)-\bar\rho= \zeta^+_0 -\zeta^-_0$ (positive and negative part) where the evolution of $\zeta^+_t$ and $\zeta^-_t$ are defined by
\begin{align*}
\frac{d}{dt}\zeta^+ =-\fA(\zeta^+), \quad \zeta^+(0) = \zeta^+_0 \\
\frac{d}{dt}\zeta^- = -\fA(\zeta^-), \quad \zeta^-(0) = \zeta^-_0
\end{align*}therefore, by linearity, $\rho(t)-\bar{\rho}= \zeta_t^+ - \zeta_t^-$ and we get
\begin{eqnarray*}
\tr{| \bL (\zeta_t^+ - \zeta_t^-) \bL^\dag |} & \leq & \tr{|\bL \zeta_t^+ \bL^\dag |} + \tr{|\bL \zeta_t^- \bL^\dag|} \\
& = & \tr{\bL (\zeta_t^+ + \zeta_t^-) \bL^\dag } \leq \tr{\bL |\rho_0 - \bar{\rho}| \bL^\dag} \; e^{-k!\, t} \; ,
\end{eqnarray*}
where the last inequality follows from Lemma \ref{lem:Lyap} and $\zeta_0^++\zeta_0^- = |\rho(0) - \bar{\rho}|$.
\end{proof}

Regarding exponential convergence, since the kernel of $\bL$ coincides with $\cH_{\alpha,k}$ and the spectrum of $\sqrt{\bL^\dag \bL}$ has no accumulation point at 0, the role of the $\bL$ operators is essentially to project $\rho(t)$ onto the complement of $\cH_{\alpha,k}$. I.e.~if $\mathcal{P}_{k}$ is the orthonormal projector from $\cH$ onto $\cH_{\alpha,k}$, then there exists some $c>0$ and $c'>0$ (depending on $\rho_0$) such that
$$ \tr{| (\bI-\mathcal{P}_{k}) (\rho(t)-\bar\rho) (\bI-\mathcal{P}_{k}) |} \leq \; c \; \tr{\bL |\rho_0 -\bar\rho| \bL^\dag} \; e^{-k!\, t} \; \leq \; c'  \; e^{-k!\, t} \; .$$
This expresses exponential convergence towards $\cH_{\alpha,k}$ in the trace-norm. The inclusion of the unbounded operator $\bL$
%AScorr-IN
on the left-hand side
%AScorr-OUT
makes the statement of Theorem \ref{thm:convergence} slightly stronger.

Note that the Hermitian operators with support on $\cH_{\alpha,k}$ belong to $\KL$. They form a $k^2$-dimensional real subspace of $\KL$, since $\cH_{\alpha,k}$ is a $k$-dimensional complex subspace of $\cH$. Thus the exponential convergence neglects $k^2$ directions in $\KL$. The behavior of the system in these directions, i.e.~how the limit $\bar\rho$ depends on $\rho_0$, is clarified in the following section.

%%%%  INVARIANTS  %%%%

\subsection{Invariants of the dynamics} \label{sec:INV}

We will derive invariants of the dynamics in terms of bounded  Hermitian operators on $\cH$, i.e., in terms of physical observables. For this, we first define a continuous extension to any $\rho_0 \in \Ktr$ of the superoperator mapping $\rho_0$ to its corresponding limit $\bar\rho$ according to Theorem \ref{thm:convergence}.

Consider the   map $\fK$  from $\Kf$ to $\Ktr$ defined as follows: for any $\xi\in\Kf$, $\fK(\xi)=\lim_{t\mapsto +\infty} \rho(t)$ where $\rho(t)$ is the unique solution of~\eqref{eq:mod} with $\rho_0=\xi$. By construction, $\fK$ is linear, trace preserving, completely positive and a contraction for the nuclear norm \cite[Chapter 9]{NielsenChuang}:
$$
\forall \xi_1,\xi_2\in\Kf, \quad \tr{\big|\fK(\xi_1)-\fK(\xi_2)\big|} \leq \tr{|\xi_1-\xi_2|}
.
$$ Since $\Kf$ is dense in $\Ktr$, we can extend the domain of definition of $\fK$ to all $\xi \in \Ktr$ by continuity. From there, we get the following result.

\begin{thm}\label{thm:invariants}
There exist $k^2$ linearly independent Hermitian bounded operators $\bzq_{m,m'}$,  $m,m'=1,2,...,k$, which are invariant under the dynamics \eqref{eq:mod}, i.e.~for which
$$ \tr{\bzq_{m,m'} \, \rho_t } = \tr{\bzq_{m,m'} \, \rho_0 }$$
for any trajectory $[0,+\infty) \ni t \mapsto \rho_t \in \KL$.

Moreover, the linear space of invariant  Hermitian operators spanned by $\{ \bzq_{m,m'} \}_{m,m'=1...k}$ contains in particular the $k$ operators
\begin{eqnarray*}
\boldsymbol{Q}^{\cos}_{m} = \sum_{n \in \mathbb{N}} \cos\left(\frac{2\pi mn}{k}\right) \ket{n}\bra{n} \quad &\text{ for } & m=0,1,...,\lceil \tfrac{k-1}{2} \rceil \, ;\\
\boldsymbol{Q}^{\sin}_{m} = \sum_{n \in \mathbb{N}} \sin\left(\frac{2\pi mn}{k}\right) \ket{n}\bra{n} \quad &\text{ for } & m=1,...,\lfloor \tfrac{k-1}{2} \rfloor \, .
\end{eqnarray*}
\end{thm}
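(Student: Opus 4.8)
The plan is to split the proof into an abstract existence argument for the $k^2$ invariants, based on duality, and a concrete identification of the $k$ diagonal ones, based on a discrete-Fourier symmetry of the $k$-photon operators.

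\textbf{Existence via duality.} First I would exploit the limit map $\fK:\Ktr\to\Ktr$ introduced just above the statement. The decisive observation is the semigroup identity $\fK(\rho_t)=\fK(\rho_0)$ along any trajectory: indeed $\fK(\rho_t)=\lim_{s\to\infty}\rho_{s+t}=\lim_{u\to\infty}\rho_u=\fK(\rho_0)$, so the asymptotic state does not depend on where one starts along a trajectory. By Theorem~\ref{thm:convergence} the range of $\fK$ is contained in the Hermitian operators supported on $\cH_{\alpha,k}$; conversely every such operator is a steady state of~\eqref{eq:mod}, since $\bL$ annihilates $\cH_{\alpha,k}$, so any $\rho$ supported there gives $\fA(\rho)=0$ and is thus a fixed point of $\fK$. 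Hence the range of $\fK$ is \emph{exactly} the space of Hermitian operators on $\cH_{\alpha,k}$, of real dimension $k^2$, i.e.~$\fK$ has rank $k^2$. Passing to the dual $\fK^{*}$ acting on the dual space $(\Ktr)^{*}=\mathcal{B}(\cH)$ of bounded Hermitian operators, with pairing $\langle B,\rho\rangle=\tr{B\rho}$, for any bounded Hermitian $B$ the operator $\bzq:=\fK^{*}(B)$ satisfies
$$\tr{\bzq\,\rho_t}=\tr{B\,\fK(\rho_t)}=\tr{B\,\fK(\rho_0)}=\tr{\bzq\,\rho_0}\,,$$
so it is an invariant bounded Hermitian observable. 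Since a finite-rank operator and its dual share the same rank, $\fK^{*}$ has rank $k^2$, and I would take $\{\bzq_{m,m'}\}_{m,m'=1}^{k}$ to be any real basis of its range. Letting $t\to\infty$ in the invariance relation also shows that any invariant $\bzq$ equals $\fK^{*}(\bzq)$, so this range is precisely the space of invariants.

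\textbf{Explicit diagonal invariants via Fourier insight.} For the second claim I would introduce the diagonal unitaries $\bzq_m=e^{2i\pi m\bN/k}=\sum_{n}e^{2i\pi mn/k}\ket{n}\bra{n}$ for $m\in\{0,1,\dots,k-1\}$. A direct computation on $\cH^f$ gives
$$\bzq_m\,\ba^k\ket{n}=e^{2i\pi m(n-k)/k}\sqrt{n(n-1)\cdots(n-k+1)}\,\ket{n-k}=e^{2i\pi mn/k}\sqrt{n(n-1)\cdots(n-k+1)}\,\ket{n-k}=\ba^k\,\bzq_m\ket{n}\,,$$
where the key cancellation is $e^{-2i\pi m}=1$; hence $[\bzq_m,\ba^k]=0$ and likewise $[\bzq_m,(\ba^\dagger)^k]=0$, so $\bzq_m$ commutes with both $\bL$ and $\bL^\dagger$. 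Consequently the Heisenberg-picture generator $\fA^{*}(\bzq):=\tfrac12(\bzq\,\bL^\dagger\bL+\bL^\dagger\bL\,\bzq)-\bL^\dagger\bzq\bL$, which is the formal adjoint of $\fA$ for the pairing $\tr{\bzq\,\fA(\rho)}=\tr{\fA^{*}(\bzq)\,\rho}$, evaluated at $\bzq_m$ reduces to $\fA^{*}(\bzq_m)=\tfrac12[\bzq_m,\bL^\dagger\bL]=0$. Therefore $\tfrac{d}{dt}\tr{\bzq_m\rho_t}=-\tr{\fA^{*}(\bzq_m)\,\rho_t}=0$, where the trace manipulations are justified exactly as in Lemma~\ref{lem:Lyap}, by reducing to $\rho_t\in\Kf$ and extending by density. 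Thus each $\bzq_m$ is invariant, and so are its Hermitian real and imaginary parts $\boldsymbol{Q}^{\cos}_{m}$ and $\boldsymbol{Q}^{\sin}_{m}$, which are bounded since $|\cos|,|\sin|\le1$.

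\textbf{Counting and inclusion.} The diagonal entries of $\bzq_m$ depend only on $n\bmod k$, so the family $\{\boldsymbol{Q}^{\cos}_{m}\}_{m=0}^{\lceil(k-1)/2\rceil}\cup\{\boldsymbol{Q}^{\sin}_{m}\}_{m=1}^{\lfloor(k-1)/2\rfloor}$ is exactly the standard real Fourier basis of functions on $\mathbb{Z}/k\mathbb{Z}$; one checks it has $k$ elements and is linearly independent. Each of these $k$ operators is invariant by the previous paragraph, hence lies in the $k^2$-dimensional space of invariants, i.e.~in $\mathrm{span}\{\bzq_{m,m'}\}$, which is the claimed inclusion. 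I expect the main obstacle to be the rigorous bookkeeping around the unbounded $\bL$: justifying the cyclicity $\tr{\bzq\,\fA(\rho)}=\tr{\fA^{*}(\bzq)\,\rho}$ and the commutations $[\bzq_m,\bL]=0$ on the correct dense domains, and confirming that $\fK^{*}$ genuinely maps into \emph{bounded} operators (via $(\Ktr)^{*}=\mathcal{B}(\cH)$) rather than merely formal observables. These are exactly the points where the infinite-dimensional subtleties concentrate, but they are controlled by the density and duality machinery already assembled in Lemmas~\ref{lem:NormL}--\ref{lem:Lyap}.
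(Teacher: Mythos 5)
Your proposal is correct, and its first half is essentially the paper's own argument: the paper likewise builds everything on the limit map $\fK$, on its range being the $k^2$-dimensional space of Hermitian operators supported on $\cH_{\alpha,k}$, and on the duality $(\Ktr)^{*}=\mathcal{B}(\cH)$ --- it merely writes the argument in coordinates, expanding $\fK(\rho_0)$ in a fixed Hermitian basis of operators on $\cH_{\alpha,k}$ with real coefficients $Q_{m,m'}(\rho_0)$ and then representing these continuous linear functionals by bounded Hermitian operators $\bzq_{m,m'}$; that is exactly your $\fK^{*}$ in different notation. Where you genuinely diverge is in the second half. The paper proves invariance of $\boldsymbol{Q}^{\cos}_{m},\boldsymbol{Q}^{\sin}_{m}$ by a direct Fock-basis computation: it writes out the time derivative of $\sum_n \cos\left(\tfrac{2\pi mn}{k}\right)\bra{n}\rho_t\ket{n}$ under $\mathfrak{L}_{\bL}$ and observes that every term carries the factor $\cos\left(\tfrac{2\pi mn}{k}\right)-\cos\left(\tfrac{2\pi m(n+k)}{k}\right)=0$. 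Your route --- $\bzq_m=e^{2i\pi m\bN/k}$ commutes with $\ba^k$, hence with $\bL$, $\bL^\dagger$ and $\bL^\dagger\bL$, so the Heisenberg-picture generator $\fA^{*}$ annihilates it --- rests on exactly the same cancellation $e^{-2i\pi m}=1$, but is more structural; both versions require the same Lemma~\ref{lem:Lyap}-style reduction to $\Kf$ and density to justify trace cyclicity with the unbounded $\bL$, which you correctly flag. You also make explicit something the paper leaves implicit: that any bounded invariant $\bzq$ satisfies $\fK^{*}(\bzq)=\bzq$, so the range of $\fK^{*}$ is \emph{exactly} the space of invariants; this is what rigorously upgrades ``each $\boldsymbol{Q}^{\cos}_{m}$ is invariant'' to the claimed membership in $\mathrm{span}\{\bzq_{m,m'}\}$, a step the paper's proof does not spell out. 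One gloss you share with the paper: asserting that every Hermitian operator supported on $\cH_{\alpha,k}$ lies in the range of the continuously extended $\fK$ requires checking $\fK(\bar\rho)=\bar\rho$ for steady states $\bar\rho\notin\Kf$, which follows from trace-norm contractivity of the semigroup (approximate $\bar\rho$ by elements of $\Kf$ and use that the flow is a contraction); neither your text nor the paper writes this out, so it is worth adding a line.
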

\begin{proof}
The image of $\fK$ has support on $\cH_{\alpha,k}$, hence we can write
\begin{eqnarray*}
\fK(\rho_0) &=& \sum_{m=1}^k \sum_{m'=1}^{m-1}  Q_{m,m'}(\rho_0)\, (\ket{\alpha_m}\bra{\alpha_{m'}}+\ket{\alpha_{m'}}\bra{\alpha_m}) \\
&& + \sum_{m=1}^k  Q_{m,m}(\rho_0)\, \ket{\alpha_m}\bra{\alpha_m}\\
&& + \sum_{m=1}^k \sum_{m'=m+1}^{k}  Q_{m,m'}(\rho_0)\, (i\ket{\alpha_m}\bra{\alpha_{m'}}-i\ket{\alpha_{m'}}\bra{\alpha_m})
\end{eqnarray*}
with $Q_{m,m'}(\rho_0)$ real. Since $\fK$ is a linear continuous map from $\Ktr$ to $\Ktr$, the  $Q_{m,m'}: \Ktr\ni \rho_0\mapsto Q_{m,m'}(\rho_0)$ are linear continuous maps from $\Ktr$ to $\RR$. Thus  they belong to the dual of $\Ktr$ corresponding to the set of   bounded operators on $\cH$:
$$Q_{m,m'}(\rho) \; = \; \tr{\bzq_{m,m'}\, \rho}$$
where  $\bzq_{m,m'}$ are \emph{bounded} Hermitian operators.
Since any Hermitian operator $\bar\rho$ with support on $\cH_{\alpha,k}$ belongs to the image of $\fK$, and this is a $k^2$-dimensional set, the $k^2$ operators $\{ \bzq_{m,m'} \}$ must indeed be linearly independent. Moreover, since all $\rho_t$ along the trajectory defined by~\eqref{eq:mod} starting at $\rho_0 \in \KL$ are associated to the same $\bar\rho=\fK(\rho_t)$, we have that $\tr{\bzq_{m,m'}\, \rho_t}$ is constant along such trajectories.

The particular operators $\boldsymbol{Q}^{\cos}_{m}$ and $\boldsymbol{Q}^{\sin}_{m}$ are indeed bounded and e.g.
$$\tr{ \boldsymbol{Q}^{\cos}_{m} \xi } = \sum_{n\in\mathbb{N}}\,  \cos\left(\frac{2\pi mn}{k}\right)\, \xi_n $$
for any $\xi \in \Ktr$, where we have written $\xi_n = \bra{n} \xi \ket{n}$; let also $\xi_{\ell,n} = \bra{\ell} \xi \ket{n}$. Computing $\mathfrak{L}_{\bL}(\rho)$ explicitly in the canonical basis, we get\vspace{2mm}

$\sum_{n \in \mathbb{N}}\, \cos\left(\frac{2\pi mn}{k}\right)\, \tfrac{d}{dt}\xi_n$
\vspace{-2mm}
\begin{eqnarray*}
& = & \sum_{n \in \mathbb{N}}\; \xi_{n+k} \, (n+k)...(n+1) \; \left( \cos\left(\tfrac{2\pi mn}{k}\right) - \cos\left(\tfrac{2\pi m(n+k)}{k}\right) \right) \\
& & - \frac{\alpha^k}{2} \sum_{n \in \mathbb{N}}\; (\xi_{n,n+k}+\xi_{n+k,n}) \; \sqrt{(n+k)...(n+1)} \; \left( \cos\left(\tfrac{2\pi mn}{k}\right) - \cos\left(\tfrac{2\pi m(n+k)}{k}\right) \right)\;\; = 0 \, ,
\end{eqnarray*}
which indeed proves invariance of $\boldsymbol{Q}^{\cos}_{m}$ along trajectories; for $\boldsymbol{Q}^{\sin}_{m}$ the proof is exactly the same.
\end{proof}

Note that the pure states on $\cH_{\alpha,k}$ span a space of dimension $k$, thus in this sense having $k$ invariants can provide significant insight about the link between $\rho_0$ and its associated $\bar\rho$. In the present case, this link is particularly meaningful.

We have $\boldsymbol{Q}^{\cos}_{0}= \bI$, so this particular invariant just expresses conservation of the trace of $\rho$. The other particular invariants feature as eigenstates so-called ``Schr\"odinger cat states'' of the harmonic oscillator, more soberly called coherent quantum superpositions of mesoscopic states, and whose general form is defined in equation \eqref{eq:defcat}. More precisely, denoting
$$\ket{C^\ell_\alpha} = \frac{1}{\vartheta} \sum_{m=1}^k \, e^{2 i \pi \ell m / k} \, \ket{\alpha_m}$$
with $\vartheta$ a normalizing constant, a few computations based on the definitions directly yield:
$$\boldsymbol{Q}^{\cos}_{m} \, \ket{C^\ell_\alpha} \; = \cos\left(\tfrac{2\pi\ell m}{k}\right) \, \ket{C^\ell_\alpha}
\quad \text{and} \quad \boldsymbol{Q}^{\sin}_{m} \, \ket{C^\ell_\alpha} \; = \sin\left(\tfrac{2\pi\ell m}{k}\right) \, \ket{C^\ell_\alpha} \; .$$
The Schr\"odinger cats $\ket{C^\ell_\alpha}$ are specifically quantum states with no classical analogue, and they are a promising tool towards implementing quantum IT applications \cite{MirrahimiCatComp2014}, thanks to their inherent insensitivity to part of the typical perturbations present in quantum systems. The known invariants allow us to predict towards which fraction of each cat $\ket{C^\ell_\alpha}$ an arbitrary initial state $\rho_0$ will evolve. This can be useful for investigating more precisely the sensitivity of information encoded in such cat states to typical perturbations.

The authors of \cite{MirrahimiCatComp2014} discuss the case $k=2$ in this direction. In that case, the $k$ particular invariants of Theorem \ref{thm:invariants} reduce to the identity and the parity operator $\sum_{n \in \mathbb{N}} \, (-1)^n \, \ket{n}\bra{n}$. Explicit expressions for the $k^2-k=2$ remaining linearly independent invariants are also provided in terms of Bessel functions. Generalizing these expressions for $k>2$ remains for future work.

%%%%%%%%%%%%%%%
%%%  CONCL  %%%
%%%%%%%%%%%%%%%

\section{Conclusion}

We have proved, with a rigorous infinite-dimensional treatment, that a harmonic oscillator governed by a Lindblad master equation where the typical drive and loss channels are $k$-photon processes instead of single-photon ones, converges to a protected subspace spanned by $k$ coherent states of the same amplitude and uniformly distributed phases. We have also proved the existence of $k^2$  invariant bounded observables (Hermitian operators), i.e.~whose expectation value is conserved by the dynamics. Knowing these invariants would allow to directly predict the final state $\bar \rho$ towards which a given $\rho_0$ converges. We have provided explicit expressions for $k$ such invariant observables, whose eigenstates are Schr\"odinger cat states belonging to the protected subspace and which appear as robust candidates to encode quantum information.

The infinite-dimensional arguments use the Hille-Yosida theorem and a Lyapunov analysis in a particular family of Banach spaces $\KL$ whose metric is built directly from the $k$-photon Lindblad operator. For practical purposes, our contribution is to show that there exists a self-consistent way to indeed have infinite-dimensional convergence to a so-called protected subspace in this model, and to prove exponential  convergence speed.

We guess that the methods used here to study  the well-posedness of  our particular infinite-dimensional Lindbald master equation can be adapted to other models appearing in reservoir engineering for  cavity and/or  circuit quantum electrodynamics.

%%%%%%%%%%%%%%%%
%%%  BIBLIO  %%%
%%%%%%%%%%%%%%%%

\section*{Acknowledgment}

The authors thank Mazyar Mirrahimi and Joachim Cohen for many useful discussions.

\bibliographystyle{plain}
%\bibliography{C:/Users/Remi/Dropbox/These/biblio}
%\bibliography{RouchonJabRef}
%\bibliography{E:/Latex/RouchonJabRef}

\section*{Appendix}

%\appendix
%\begin{thm}[Hille-Yosida in Banach spaces \protect{\cite[page 116]{BrezisBook1987}}]\label{thm:HillYosida}
\begin{thm}[Hille-Yosida in Banach spaces \protect{\cite[Chap.7]{BrezisBook1987}}]\label{thm:HillYosida}
Let $E$ be a Banach space and $A$ an m-accretive operator on $E$, i.e., the domain $D(A)$ of $A$ is dense in $E$ and for every $\lambda >0$, $I+\lambda A$ is a bijection from $D(A)$ into $E$ with, for all $u\in E$, $\left\| (I+\lambda A)^{-1} u \right\| \leq \|u\|$. Then for any $u_0\in D(A)$, there exists a unique function
$[0,+\infty[\ni t \mapsto u(t) \in D(A)$ that is continuously differentiable such that
$$
\dotex u  + A(u)=0 \text{ for } t\in [0,+\infty[, \quad u(0)=u_0.
$$
Moreover $\forall t \geq 0$, $ \| u(t)\|\leq \|u_0\|$ and $\|\dotex u(t)\| = \|A(u(t))\| \leq  \|A(u_0)\|$
\end{thm}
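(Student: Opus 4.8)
The plan is to follow the classical Yosida-approximation route, building an approximate bounded generator whose semigroup we can write explicitly, and then passing to the limit. Since $A$ is $m$-accretive, for each $\lambda>0$ the resolvent $J_\lambda=(I+\lambda A)^{-1}$ is a well-defined bounded operator on all of $E$ with $\|J_\lambda\|\le 1$. I would introduce the Yosida approximant $A_\lambda=\tfrac1\lambda(I-J_\lambda)=A J_\lambda$, which is bounded (indeed $\|A_\lambda\|\le 2/\lambda$), and form the uniformly continuous semigroup $S_\lambda(t)=e^{-tA_\lambda}$. Writing $e^{-tA_\lambda}=e^{-t/\lambda}\,e^{(t/\lambda)J_\lambda}$ and using $\|J_\lambda\|\le1$ shows $\|S_\lambda(t)\|\le e^{-t/\lambda}e^{t/\lambda}=1$, so each $S_\lambda$ is a contraction semigroup; this propagates the hypothesis $\|(I+\lambda A)^{-1}\|\le1$ to the level of the approximate flow.

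The next step is convergence as $\lambda\to0^+$. For $u\in D(A)$ one has $u-J_\lambda u=\lambda J_\lambda Au$, hence $\|u-J_\lambda u\|\le\lambda\|Au\|\to0$; combined with $\|J_\lambda\|\le1$ and density of $D(A)$, a standard three-$\varepsilon$ argument yields $J_\lambda u\to u$ for every $u\in E$, and $A_\lambda u=J_\lambda Au\to Au$ for $u\in D(A)$. The key quantitative estimate compares two approximants: since resolvents commute ($J_\lambda J_\mu=J_\mu J_\lambda$ by the resolvent identity), $S_\mu(s)$ commutes with $A_\lambda$, so differentiating $s\mapsto S_\lambda(t-s)S_\mu(s)u_0$ and integrating gives
\begin{equation*}
S_\mu(t)u_0-S_\lambda(t)u_0=\int_0^t S_\lambda(t-s)\,S_\mu(s)\,(A_\lambda-A_\mu)u_0\,ds,
\end{equation*}
whence $\|S_\mu(t)u_0-S_\lambda(t)u_0\|\le t\,\|(A_\lambda-A_\mu)u_0\|$ for $u_0\in D(A)$. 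As $A_\lambda u_0,A_\mu u_0\to Au_0$, this is a Cauchy estimate uniform on compact time intervals, so $S_\lambda(t)u_0$ converges uniformly on $[0,T]$ to a continuous limit $u(t)$, and $S_\lambda(t)$ extends to a limit contraction semigroup $S(t)$ on all of $E$ by density.

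To identify $u$, I would pass to the limit in the integral identity $S_\lambda(t)u_0=u_0-\int_0^t S_\lambda(s)A_\lambda u_0\,ds$: using $A_\lambda u_0\to Au_0$, the bound $\|S_\lambda(s)A_\lambda u_0\|\le\|A_\lambda u_0\|$, and dominated convergence, one obtains $u(t)=u_0-\int_0^t S(s)Au_0\,ds$. The integrand $s\mapsto S(s)Au_0$ is continuous, so $u$ is $C^1$ with $\dotex u(t)=-S(t)Au_0$, which already yields the bounds $\|u(t)\|\le\|u_0\|$ and $\|\dotex u(t)\|=\|S(t)Au_0\|\le\|Au_0\|$. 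It then remains to show $u(t)\in D(A)$ with $Au(t)=S(t)Au_0$: here I use that $m$-accretivity forces $A$ to be closed, together with $A_\lambda S_\lambda(s)u_0=A\big(J_\lambda S_\lambda(s)u_0\big)$, where $J_\lambda S_\lambda(s)u_0\to u(s)$ and $A_\lambda S_\lambda(s)u_0=S_\lambda(s)A_\lambda u_0\to S(s)Au_0$; closedness then gives $u(s)\in D(A)$ and $Au(s)=S(s)Au_0=-\dotex u(s)$, i.e.\ $\dotex u+Au=0$ with $u(0)=u_0$.

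Finally, for uniqueness, the difference $w$ of two solutions is a $C^1$, $D(A)$-valued solution of $\dotex w+Aw=0$ with $w(0)=0$. Accretivity (selecting $f(t)$ in the duality map of $w(t)$, so that $\mathrm{Re}\langle f(t),Aw(t)\rangle\ge0$) gives $\tfrac12\tfrac{d}{dt}\|w(t)\|^2\le-\mathrm{Re}\langle f(t),Aw(t)\rangle\le0$, whence $\|w(t)\|\equiv0$; alternatively, one differentiates $s\mapsto S(t-s)w(s)$ to show it is constant in $s$, forcing $w(t)=S(t)w(0)=0$. I expect the main obstacle to be precisely the convergence-and-regularity step: establishing the uniform Cauchy estimate (which hinges on commutativity of the resolvents) and, above all, upgrading the merely continuous limit to a genuine $C^1$ solution taking values in $D(A)$ — this is exactly where the closedness of $A$ and the careful dominated-convergence passage in the integral formula become indispensable.
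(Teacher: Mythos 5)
The paper does not prove this theorem at all---it is quoted verbatim from \cite[Chap.~7]{BrezisBook1987}---and your argument is precisely the classical Yosida-approximation proof found in that reference (resolvent contractions $J_\lambda$, bounded approximants $A_\lambda=AJ_\lambda$ generating contraction semigroups, the commutator-free Cauchy estimate $\|S_\mu(t)u_0-S_\lambda(t)u_0\|\le t\,\|(A_\lambda-A_\mu)u_0\|$, and closedness of $A$ to land back in $D(A)$), so it is correct and matches the intended source. The only fragile point is your first uniqueness variant, since $t\mapsto\|w(t)\|^2$ need not be differentiable in a general Banach space and the duality selection $f(t)$ need not be continuous, but your alternative---showing $s\mapsto S(t-s)w(s)$ is constant, which is legitimate because step 6 already gives $S(t)D(A)\subset D(A)$ with $AS(t)=S(t)A$ on $D(A)$---is the standard rigorous fix.
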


\end{document}